\newtheorem{theorem}{Theorem}
\newtheorem{lemma}{Lemma}
\begin{document}

\title{High-resolution chirplet transform:   from parameters  analysis to  parameters combination}
\author{Xiangxiang Zhu, Bei Li,  Kunde Yang,  Zhuosheng Zhang, Wenting Li 
\thanks{This work was supported by the National Natural Science Foundation of China under Grant no. U20B2075, the Fundamental Research Funds for the Central Universities  under Grant no.  G2021KY05103.}
\thanks{X. Zhu is with School  of Mathematics and Statistics, Northwestern Polytechnical University,  Xi'an 710072, China (e-mail: zhuxiangxiang@nwpu.edu.cn)} 	
\thanks{ B. Li, Z. Zhang and  W. Li are with School  of Mathematics and Statistics, Xi'an Jiaotong University,  Xi'an, 710049, China (e-mail: libeixjtu@163.com; zszhang@mail.xjtu.edu.cn;  wenting1990@stu.xjtu.edu.cn).}
\thanks{K. Yang  is with School of Marine Science and Technology, Northwestern Polytechnical University, Xi’an 710072, China (e-mail: ykdzym@nwpu.edu.cn).}}

\markboth{IEEE Transactions on Signal Processing}%
{Shell \MakeLowercase{\textit{et al.}}: Bare Demo of IEEEtran.cls for IEEE Journals}
\maketitle

\begin{abstract}
The standard chirplet transform (CT) with a chirp-modulated Gaussian window provides a valuable tool for analyzing linear chirp signals.  The parameters present in the window determine the performance of the CT and play a vital role in high-resolution time-frequency (TF) analysis.  In this paper,   we give the window shape analysis of the CT and compare it with the extension that employs a  rotating Gaussian window by the fractional Fourier transform.  The given parameters analysis provides theoretical guidance for developing high-resolution CT. We then propose a multi-resolution chirplet transform (MrCT) by combining multiple CTs with different parameter combinations.  These are combined geometrically to obtain an improved TF resolution by overcoming the limitations of any single representation of the CT.  By deriving a  combined instantaneous frequency equation, we further develop a  high-concentration TF post-processing approach to improve the readability of the  MrCT. Numerical experiments on simulated and real signals verify the effectiveness of the proposed methods.
\end{abstract}

\begin{IEEEkeywords}
Time-frequency analysis,  chirplet transform,   high-resolution time-frequency representation,  multi-resolution signal analysis, closely-spaced instantaneous frequencies.

\end{IEEEkeywords}

\IEEEpeerreviewmaketitle

\section{Introduction}

\IEEEPARstart
Multi-component non-stationary signals are widely observed in many real-world applications, such as seismic  [1,2], radar  and  sonar [3,4], biomedicine  [5,6],  and  mechanical  engineering  [7,8].   Such signals  contain several components with time-varying characteristics and often express complex oscillation content (like fast-varying or closely-spaced instantaneous frequencies (IFs)) [4-6,9].  To capture the useful  characteristics of multi-component non-stationary signals as accurately as possible is of great significance for various applications because that can provide insight into  the complex structure of the signal,  help to understand the data system and predict its future behavior. 

Time-frequency (TF) analysis methods [10] provide an effective tool for characterizing  multi-component non-stationary signals with  time-varying frequency content. In the past decades, the study of reliable TF representations for non-stationary signal analysis has been a hot research topic. The  most well-known  TF method  is the linear TF transforms, such as the short-time Fourier transform (STFT) [11], the continuous wavelet transform (CWT) [12], and  the chirplet transform (CT) [13].  In linear TF analysis,  the signal is studied via inner products with a  pre-assigned basis. The main limitation of   this kind of method is that  they  can not  precisely  localize a signal in both time and frequency simultaneously because of the effect of the Heisenberg uncertainty principle [5,10].  In order to achieve high  TF resolution,  the bilinear TF representations,  represented by the Wigner-Ville distribution (WVD) [14], are presented. Good TF resolution can be obtained by the WVD but  in addressing  multi-component or non-linear frequency modulated signals,   it suffers from cross-terms (TF artifacts),  which  renders  it unusable for some practical applications [15]. Various  smoothed versions of  WVD by introducing  kernel smoothing are developed to reduce the unwanted inferences,  which leads to the Cohen’s class or Affine class [10,16], but the kernel smoothing  again blurs the TF distribution.

To increase TF concentration and obtain good TF readability,  the post-processing methods of linear TF representations  have  been widely studied.  The reassignment method (RM)  [17,18] reallocates the TF coefficients from the original position to the center of gravity of the signal's energy distribution such that a sharpened TF representation is yielded.  As a special case of RM, the synchrosqueezing transform (SST), put forward by Daubechies and Maes in the mid-1990s [19], squeezes the TF coefficients into the instantaneous frequency  (IF)  trajectory only in the  frequency direction, which not only improves the TF concentration but also  allows for signal reconstruction  [20].  Over the last decade,  the SST has received considerable research attentions. It has been extended to different transform frameworks, including  the STFT-based SST [21], the synchrosqueezed curvelet transform [22],  the synchrosqueezing S-transform [23], the  chirplet-based SST [24], etc.  It is known that one drawback associated with SST  is that it suffers from a low TF resolution when dealing with fast-varying  signals [25,26]. To address this drawback,  many improvements have been presented involving  the demodulated SST [25,27], the high-order SST [26,28], the multiple squeezes transform [24,29], the synchroextracting chirplet transform [30], and the time-reassigned  SST [31,32]. The TF post-processing methods introduced above have been increasingly used and adapted in many fields [6,7,23,27,28,31,32]. However,  these methods have  the intrinsic limitation that they operate on a linear TF representation, associated with a fixed TF resolution  given by a global window or wavelet.

In order to improve the TF resolution more essentially,  many adaptive  TF analysis methods have been proposed. In [33], an adaptive STFT using  Gaussian window function with time-varying variance  (i.e., window width) dependent on the chirp rate (CR) of the input signal is proposed. A more powerful adaptive STFT  is developed in [34,35], where the variance used in the window function is time-frequency-varying because there exist multiple different  CRs at the same time instant for multi-component signals.   S. Pei et. al [36] also presented an improved version of adaptive STFT with a chirp-modulated  Gaussian window.  In  [37], L. Li et al. proposed   the  adaptive SST and its second-order extension based on the adaptive STFT  with a time-varying window  to further enhance the TF concentration. In addition to adaptive STFT-based methods,  the adaptive wavelet transform (AWT) [38]  is developed  by adjusting the wavelet parameter.  The  bionic wavelet transform,  proposed by Yao and Zhang [39], introduces an extra parameter to adaptively adjust the  TF resolution not only by the frequency but also by the instantaneous amplitude and its first-order differential.   The energy concentration of the S-transform has been addressed in [40,41] by optimizing the width of the window function used. Moreover, the adaptive WVD has been developed by designing different types of filters in the ambiguity domain. L. Stankovi\'{c} [42] proposed L-class WVD;  Boashash and O'Shea [43] proposed the polynomial WVD for analyzing polynomial phase signals;   Katkovnik  [44]  presented   the local polynomial  WVD; Jones and Baraniuk [45]  proposed the adaptive optimal kernel TF distribution  by introducing  a signal-dependent radially Gaussian kernel that adapts over time. Since 2013, Khan et al. conducted a series of research in developing adaptive WVD,  mainly involving  the adaptive fractional spectrogram [46], the  adaptive directional TF distribution [47], the locally optimized adaptive directional TF  distribution [48].  Despite  the high TF resolution, the adaptive TF representations, especially  the adaptive non-linear TF analysis,  generally have high computational complexity, and determining the matching parameters remains challenging [48,49].

Other high-resolution methods are based on the combination of multiple spectrograms or wavelet transforms [5,50,51].  The combined methods compute the geometric mean of multiple estimates with short and long windows or a set of wavelets, and have been shown to result in good joint TF resolution for signals consisting of mixtures of tones and pulses [5,50]. Such combinations of conventional spectrograms or  wavelet transforms do not work well, however, for signals containing mixtures of chirp-like signals; moreover,  how to improve the TF readability of the combinated transformations  is an important problem to be solved. 

The chirplet transform (CT),  proposed by Mann and Haykin [13], Milhovilovic and Bracewell [52] almost at the same time, is a popular  TF analysis method for characterizing chirp-like signals,  and it has been widely applied in different areas such as radar [53,54], biomedicine [55],  and mechanical engineering [56,57].  This method uses an  extra CR parameter than STFT (i.e., using a chirp-modulated window function, also termed chirp basis)  to characterize linear frequency variation. When the parameter equals to the chirp rate of a signal and the input window length is matching, the CT will generate a  highly concentrated  TF representation.  This fact leads to the  development of  many adaptive CTs, such as the self-tuning CT [58], the general linear CT [56], the synchro-compensating CT [59], and the double-adaptive CT [54]. The adaptability of  such methods  is achieved by modifying the window width or/and  CR parameter depending on the instantaneous signal's nature. However,  these  adaptive CTs are also  confronted with the challenges of  high computational complexity and parameter matching, as most adaptive TF analysis methods do.  To better deal with the signals with non-linear IFs, the CT has been extended to general parameterized TF transforms, which involve  the warblet transform [60,61], the polynomial CT [62], the spline-kernelled  CT [63], and the scaling-basis CT [57]. Besides,  the CT also can be seen as a three-dimensional analysis method as it represents a signal in the  joint time-frequency-CR domain when considering the  CR parameter as a variable [13,53]. In this sense, the   CT  enables separating the intersected signals with cross-over  IFs by utilizing the CR information.   The theoretical analysis and high-concentration representation of three-dimensional CT are recently introduced in  [64-66].  

Differing from the extensions of CT above, here we introduce a high-resolution analysis approach by combining multiple CTs with different window widths and CR values, which can help to achieve a high-resolution TF representation by overcoming limitations of any single representation of the CT. In this paper, we first  theoretically  analyze  the effect of  variance and CR  parameters on the TF resolution of  CT and prove that a narrow window limits the matching capacity of chirp basis. We compare the CT with its extension that employs a  rotating Gaussian window by the fractional Fourier transform [67,68]. By comparison, we  conclude that the parameters of rotation-window   CT have a clearer  geometric  meaning than that of the standard CT, but this rotation extension is actually the  CT  with a special parameter combination and thus  can not localize a signal in both time and frequency precisely.   To obtain a high-resolution TF representation,  we  propose a multi-resolution CT (MrCT) by computing  the geometric mean of  multiple CTs  with different parameter combinations, which   localizes  the  signal  in both time and frequency  better than it is possible with any single CT. Finally, we develop a TF post-processing method  to improve the  readability of MrCT based on a combined IF equation.

The remainder of the paper is organized as follows. In Section \textrm{II}, the  parameter  analysis of the standard CT and rotation-window  CT  is  presented.    In Section  \textrm{III},  we describe the  details  of the proposed MrCT method. The post-processing TF method, named multi-resolution synchroextracting chirplet transform,  is presented in Section \textrm{IV}.  Experimental results and comparative studies are introduced  in Section \textrm{V}.  Finally, the conclusions are drawn  in Section \textrm{VI}.

\section{Window shape  analysis of  CT and  its rotation-window  extension}

The following sections provide parameters analysis of  CT and compare it with a  rotation-window CT.   

\subsection{Signal model}
In this paper we consider the amplitude-modulation and frequency-modulation  (AM-FM) model  to  describe   the time-varying  features of multi-component non-stationary signals, which is given by
\begin{equation}
	f(t)=\sum_{k=1}^{K}f_k(t)=\sum_{k=1}^{K}A_k(t)\mathrm{e}^{j\phi_k(t)}, 
\end{equation}
where $K$ is a positive integer representing the number of AM-FM components, $\sqrt{-1}=j$ denotes the imaginary unit,  $A_k(t)>0$  and $\phi_k(t)$   are  the instantaneous amplitude and  instantaneous phase of the $k$-th component (or mode), respectively. The first and the second  derivatives of the  phase,  i.e.,   $\phi'_k(t)$, $\phi''_k(t)$, are referred to as  the instantaneous  frequency (IF) and the chirp rate (CR) of the $k$-th component. 
To  model  multi-component non-stationary signals  as (1) is important to  extract information hidden in $f(t)$, and this representation has been used in many applications including seismic wave analysis, medical data analysis,  mechanical vibration diagnosis and speech recognition, see for example [20,38]. 

The Fourier transform  of a given signal  $f(t)\in L^1(\mathbb{R})$, i.e. its correlation with a sinusoidal wave $\mathrm{e}^{j\omega t}$, is defined as
$$\hat{f}(\omega)=\int_{-\infty}^{+\infty}f(t)\mathrm{e}^{-j\omega t} \mathrm{d}t.$$

\subsection{Window shape analysis  of CT}
The chirplet transform (CT) [13,52] generalizes the STFT by using an extra CR parameter and its  definition is given by 
\begin{equation}
	\begin{split}
		C_{f}^{h_{\beta}}(t, \omega)=\int_{-\infty}^{+\infty}f(\mu)g^*(t-\mu)\mathrm{e}^{-j\frac{\beta}{2}(\mu- t)^2}\mathrm{e}^{-j\omega(\mu- t)}\mathrm{d}\mu,
	\end{split}
\end{equation}
where $\beta$ is the CR parameter/variable,   $g(t)$ is a window function in the Schwartz class,  superscript $*$ denotes the complex conjugate, and $h_{\beta}(t)=g(t)\mathrm{e}^{j\frac{\beta }{2}t^2}$ called  the chirp-modulated  window. 
If  $\beta=0$,  then the CT reduces to  the   STFT.  Importantly, the  squared modulus of  CT  can be interpreted as a smoothed WVD, resulting from the smoothing of the   WVD of the signal by the  WVD of the  chirp-modulated window,  as presented in Lemma 1.


\begin{lemma}
	For a  signal $f(t)$, its squared modulus of the CT can be equivalently expressed as
	\begin{equation}
		| C_{f}^{h_\beta}(t, \omega)|^2=\frac{1}{2\pi}W_f\ast\ast W_{h_{\beta}}(t, \omega),
	\end{equation}
	where $\ast\ast$ denotes the  2D convolution operator, and $W_f(t, \omega)$  and  $W_{h_{\beta}}(t, \omega)$  are   the  WVDs  of $f(t)$ and  $h_{\beta}(t)$   respectively defined as
	\begin{align}
		W_f(t, \omega)&=\int_{-\infty}^{+\infty}f(t+\frac{\tau}{2})f^*(t-\frac{\tau}{2})\mathrm{e}^{-j\omega \tau }\mathrm{d}\tau,\\
		W_{h_{\beta}}(t,\omega)&=\int_{-\infty}^{+\infty}h_{\beta}(t+\frac{\tau}{2})h_{\beta}^*(t-\frac{\tau}{2})\mathrm{e}^{-j \omega \tau }\mathrm{d}\tau.
	\end{align}	 
\end{lemma}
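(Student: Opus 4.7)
The plan is to reduce the lemma to the classical spectrogram--Wigner identity by first recognizing that the chirplet transform is, up to a unit-modulus factor that disappears in $|\cdot|^2$, an STFT with the chirp-modulated window $h_\beta$. Since $(t-\mu)^2=(\mu-t)^2$, the integrand $g^*(t-\mu)e^{-j\frac{\beta}{2}(\mu-t)^2}$ in (2) equals $h_\beta^*(t-\mu)$; pulling the $\mu$-independent phase $e^{j\omega t}$ out of the integral then reveals $|C_f^{h_\beta}(t,\omega)|^2$ as the squared magnitude of a bona fide STFT of $f$ with window $h_\beta$. From this point it suffices to reproduce the standard derivation that expresses such a spectrogram as a smoothing of the signal's WVD by the window's WVD.

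To execute this, I would expand $|C_f^{h_\beta}(t,\omega)|^2$ as a double integral in variables $\mu_1,\mu_2$ and perform the symmetric substitution $\mu_1=u+\tau/2$, $\mu_2=u-\tau/2$ (unit Jacobian). The integrand then factors into $f(u+\tau/2)f^*(u-\tau/2)$, times $h_\beta((t-u)+\tau/2)h_\beta^*((t-u)-\tau/2)$, times $e^{-j\omega\tau}$. I would next invert each symmetric product against the corresponding WVD, writing
\[
f(u+\tfrac{\tau}{2})f^*(u-\tfrac{\tau}{2})=\tfrac{1}{2\pi}\int W_f(u,\eta)\,e^{j\eta\tau}\,d\eta,
\]
and similarly for $h_\beta$ at base point $v=t-u$ with frequency variable $\xi$. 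The $\tau$-integration then produces $2\pi\,\delta(\eta+\xi-\omega)$; collapsing this delta leaves $\frac{1}{2\pi}\iint W_f(u,\eta)\,W_{h_\beta}(t-u,\omega-\eta)\,du\,d\eta$, which is exactly $\frac{1}{2\pi}(W_f\ast\ast W_{h_\beta})(t,\omega)$.

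The main obstacle is bookkeeping rather than any deep difficulty. Because $h_\beta$ is complex, its Wigner distribution need not be even in either argument, so one must carefully verify that the cross term $h_\beta^*(t-\mu_1)h_\beta(t-\mu_2)$ rearranges as $h_\beta(v+\tau/2)h_\beta^*(v-\tau/2)$ with $v=t-u$ (and not $u-t$), matching the definition in (5) rather than its reflection. The other small point is tracking the $(2\pi)^{-1}$ constants introduced by the two Fourier inversions and confirming that exactly one of them survives after the delta function is resolved, yielding the stated prefactor.
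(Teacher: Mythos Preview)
Your proposal is correct and matches the paper's approach: the paper does not give a detailed proof but simply remarks that Lemma~1 follows from the known spectrogram--Wigner identity for the STFT (citing [16,18]), which is precisely the reduction you carry out. Your careful check that $g^*(t-\mu)e^{-j\frac{\beta}{2}(\mu-t)^2}=h_\beta^*(t-\mu)$ and the subsequent standard Moyal-type computation are exactly what is needed, and your bookkeeping of the $v=t-u$ sign and the surviving $1/(2\pi)$ factor is accurate.
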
 

The proof of Lemma 1 is similar  to the result in [16,18] for the spectrogram.

From Lemma 1, we can know that the  shape of the window,  as a TF filter or kernel,  is critical because it determines the energy distribution of the CT in multi-component non-stationary signal analysis.  A matching one with the TF  feature of an observed signal can help for obtaining a high-resolution TF representation [54,56,58,59]. 

To further analyze the effect of  chirp-based window on  CT, we consider a concrete window, i.e., the Gaussian window  $g(t)={(\sqrt{2\pi}\sigma)}^{-\frac{1}{2}}\mathrm{e}^{-\frac{t^2}{2\sigma^2}}$ ($\sigma>0$ is the variance of the window),  as this function leads to the optimal TF resolution and based on it the analytic expression of some TF transformations is easy to compute.  Denote the chirp-based  Gaussian  window   as 
\begin{equation}
	\begin{split}
		h_{\sigma,\beta}(t)=g(t)\mathrm{e}^{j\frac{\beta }{2}t^2}={(\sqrt{2\pi}\sigma)}^{-\frac{1}{2}}\mathrm{e}^{-\frac{t^2}{2\sigma^2}}\mathrm{e}^{j\frac{\beta }{2}t^2},
	\end{split}
\end{equation}
then  the corresponding CT can be  expressed by
\begin{equation}
	\begin{split}
		&C_{f}^{h_{\sigma, \beta}}(t, \omega)=\\&(\sqrt{2\pi}\sigma)^{-\frac{1}{2}}\int_{-\infty}^{+\infty}f(\mu)\mathrm{e}^{-\frac{(t-\mu)^2}{2\sigma^2}}\mathrm{e}^{-j\frac{\beta}{2}(\mu- t)^2}\mathrm{e}^{-j\omega(\mu- t)}\mathrm{d}\mu.
	\end{split}
\end{equation}
The WVD of this  window $h_{\sigma, \beta}(t)$ is  calculated as 
\begin{equation}
	\begin{split}
		W_{h_{\sigma, \beta}}(t,\omega)&=\int_{-\infty}^{+\infty}h_{\sigma, \beta}(t+\frac{\tau}{2})h_{\sigma, \beta}^*(t-\frac{\tau}{2})\mathrm{e}^{-j\omega \tau }\mathrm{d}\tau\\
		&=\sqrt{2}\mathrm{e}^{-\frac{t^2}{\sigma^2}-\sigma^2(\omega -\beta t)^2}.
	\end{split}
\end{equation}
Obviously, the shape of $W_{h_{\sigma, \beta}}(t,\omega)$ is  an oblique ellipse centered at the origin in the TF plane (see Fig. 1),  and  affected by two parameters, i.e., $\beta$ and $\sigma$. Maybe, the parameter $\beta$ determines   the direction of the  ellipse axis, and the parameter $\sigma$  determines the size  of the  ellipse.   To  make  this  issue  clear, let us consider a  level curve of $W_{h_{\sigma, \beta}}(t,\omega)$, which is given by
\begin{equation}
	\begin{split}
		\frac{t^2}{\sigma^2}+\sigma^2(\omega -\beta t)^2=C,
	\end{split}
\end{equation}
where $C$ is a   positive constant (i.e., percentage
of the  maximum energy). 

\begin{figure}[!tbph]
		\vspace{-0.2cm}
	\setlength{\belowcaptionskip}{-0cm}
	\centering
	\begin{minipage}{0.9\linewidth}
		\centerline{\includegraphics[width=1\textwidth]{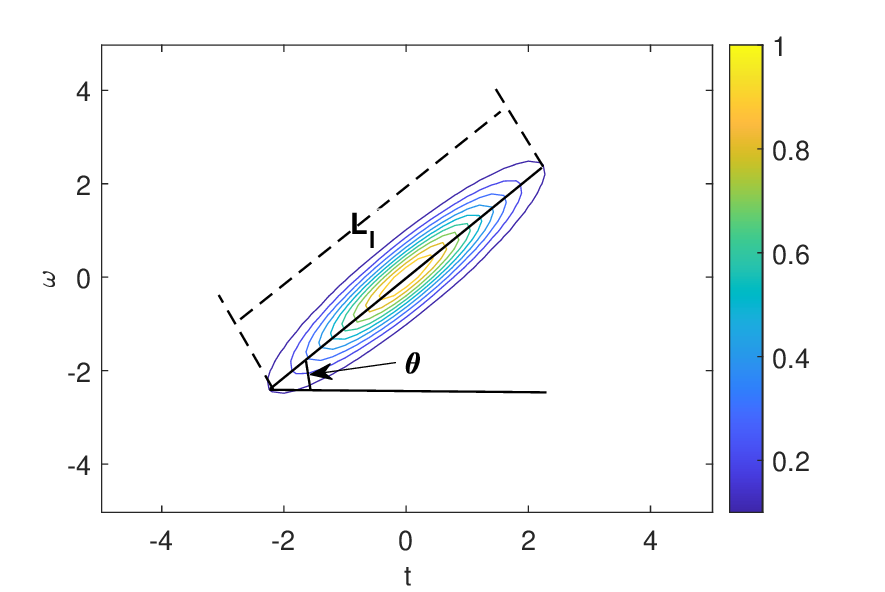}}
	\end{minipage}
	\caption{ \small Plot of $W_{h_{\sigma,\beta}}(t,\omega)$ when  $\sigma=1.5$ and $\beta=1$.}
\end{figure}

Clearly,   the shape of (9) becomes a circle when $\beta=0$ and $\sigma=1$. Except this  special case, the following  theorem  explains the influence of the parameters   $\sigma$  and  $\beta$ on the shape  of  ellipse (9).
\begin{theorem}
	For  ellipse (9), the length  of the  long axis is that  
	\begin{align}
		L_l=C({A\sigma^2}+\frac{2}{\sigma^2}+{\sqrt{A^2\sigma^4+4\beta^2}}), 
	\end{align}	 
	where $A=1+\beta^2-\frac{1}{\sigma^4}$,  and the angle $\theta$ ($\theta \in (-\frac{\pi}{2}, \frac{\pi}{2}]$)  between $t$-axis and the  long axis  (see Fig. 1)  satisfies
	\begin{align}
		& \lim_{\sigma\rightarrow +\infty} \tan \theta =\beta,\\
		& |\tan \theta| \geq \sqrt{\frac{1}{\sigma^4}-1}~  ~ \text{for}~ ~ 0<\sigma<1.	
	\end{align}	 
\end{theorem}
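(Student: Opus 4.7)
The plan is to rewrite the level curve (9) as a quadratic form $\mathbf{x}^\top M \mathbf{x} = C$ with $\mathbf{x} = (t,\omega)^\top$ and
$$M = \begin{pmatrix} \frac{1}{\sigma^2} + \sigma^2\beta^2 & -\sigma^2\beta \\ -\sigma^2\beta & \sigma^2 \end{pmatrix},$$
and then diagonalize $M$. The long axis lies along the eigenvector for the smaller eigenvalue $\lambda_1$, and the claimed length is read off from $\lambda_1$. Setting $B := \operatorname{tr} M = \frac{1}{\sigma^2} + \sigma^2(1+\beta^2)$ and noting $\det M = 1$, the characteristic polynomial $\lambda^2 - B\lambda + 1 = 0$ delivers $\lambda_{1,2} = \tfrac{1}{2}(B \mp \sqrt{B^2-4})$ together with the identity $\lambda_1 \lambda_2 = 1$ that drives several of the later computations.

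For the length formula (10), the one non-trivial algebraic step is the identity $B^2 - 4 = A^2\sigma^4 + 4\beta^2$. I would verify this by observing that $B = A\sigma^2 + 2/\sigma^2$ (immediate from $A = 1+\beta^2 - 1/\sigma^4$), squaring, and simplifying. Combined with $1/\lambda_1 = \lambda_2 = (B + \sqrt{B^2-4})/2$, this reproduces $C(A\sigma^2 + 2/\sigma^2 + \sqrt{A^2\sigma^4+4\beta^2})$ up to the factor implicit in the authors' normalization of ``length.''

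For the angle, the first row of $(M-\lambda_1 I)\mathbf{v} = 0$ gives the slope
$$\tan\theta = \frac{p - \lambda_1}{q} = \frac{(p-r) + \sqrt{(p-r)^2 + 4q^2}}{2q},$$
with $p = M_{11}$, $r = \sigma^2$, $q = \sigma^2\beta$ (treating $\beta > 0$, which is implicit in the statement since $\sqrt{1/\sigma^4-1}\geq 0$). The limit (11) is then immediate: as $\sigma \to \infty$, $\lambda_2 \sim \sigma^2(1+\beta^2)$, hence $\lambda_1 = 1/\lambda_2 \to 0$ and $\tan\theta \to q/(r - \lambda_1) \to \beta$.

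The bound (12) is the main obstacle, but resolves cleanly through a well-chosen weak bound. Dropping the nonnegative radical yields $\tan\theta \geq (p-r)/(2q)$, and substituting the entries of $M$ reduces this to
$$\tan\theta \geq \frac{1-\sigma^4}{2\sigma^4\beta} + \frac{\beta}{2}.$$
For $0 < \sigma < 1$ both summands are positive, so AM-GM gives the lower bound $\sqrt{(1-\sigma^4)/\sigma^4} = \sqrt{1/\sigma^4 - 1}$, as required. The real difficulty is spotting that this AM-GM shortcut suffices, since an exact minimization of $\tan\theta$ in $\beta$ via calculus is possible but considerably more tedious and only yields a slightly tighter bound than the statement asks for.
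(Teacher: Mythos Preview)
Your argument is correct, but it follows a genuinely different route from the paper's. The authors do not diagonalize the quadratic form; instead they parametrize the ellipse as $\omega=\sqrt{C/\sigma^2-t^2/\sigma^4}+\beta t$, solve the one-variable problem $\max_t(t^2+\omega^2)$ for the far vertex $(t_1,\omega_1)$, and read off $L_l$ and $\tan\theta=\omega_1/t_1$ from the explicit coordinates. Their angle formula (18), $\tan\theta=\dfrac{2\beta}{\sigma^2(\sqrt{A^2\sigma^4+4\beta^2}+A\sigma^2)}+\beta$, is algebraically equivalent to your eigenvector expression via the identity $p-r=-A\sigma^2+2\sigma^2\beta^2$. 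For the bound (12) they apply $\sqrt{1+x}\le 1+x/2$ to the radical in the denominator of (18) and then split into the two cases $A\le 0$ and $A\ge 0$, each handled by a different elementary estimate. Your approach---discarding the nonnegative radical in the numerator of $\tan\theta=\big((p-r)+\sqrt{(p-r)^2+4q^2}\big)/(2q)$ and then applying AM--GM to $\tfrac{1-\sigma^4}{2\sigma^4\beta}+\tfrac{\beta}{2}$---is shorter and avoids the sign-of-$A$ case distinction entirely. What the paper's route buys is the explicit vertex coordinates (useful elsewhere in their Theorem~2); what yours buys is a cleaner and more structural derivation via $\det M=1$, $\operatorname{tr}M=A\sigma^2+2/\sigma^2$, and $B^2-4=A^2\sigma^4+4\beta^2$. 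Your hedge about the normalization of $L_l$ is also appropriate: the paper obtains $L_l$ by substituting the maximizer into the objective $t^2+\omega^2$, so their ``length'' is not the usual Euclidean axis length but (up to a constant factor) the quantity $C/\lambda_1$, exactly as your eigenvalue picture predicts.
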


\begin{proof}
	See the  Supplementary Materials-\uppercase\expandafter{\romannumeral1}.
\end{proof}
 Theorem 1 throws up  some interesting results. On the one hand,  the length  $L_l$  increases with the  value of $\beta$  increasing,  and also  exhibiting  an increase with the decrease in  value of $\sigma$,  which may lead the CT to generate a bad TF result when dealing with strong modulation signals  because a large $\beta$ and small $\sigma$  are  required to match  its fast-varying  feature. On the other hand,   from the proof (please refer to equation (49) in the Supplementary Materials),   we can know that there does not exist a correspondence relationship that $	\tan \theta=\beta$ between the rotation angle $\theta$ and parameter  $\beta$. But  a large value of  $\sigma$ is  more advisable for CT because based on it  the major axis of $W_{h_{\sigma, \beta}}(t,\omega)$  can rotate  along any direction to approach that $	\tan \theta=\beta$. Moreover, relation (12) shows that $|\tan\theta|$ is bounded and it can not be taken as a sufficiently small value  when the input variance  is small (e.g., $\sigma\leq 0.1$), which limits the slope  range of  $W_{h_{\sigma, \beta}}(t,\omega)$, leading to  inaccurate  match in dealing with  short  signal segments containing smaller CR values.

\begin{figure}[!tbh]
			\vspace{-0.1 cm}
	\setlength{\belowcaptionskip}{-0.1 cm}
	\centering
	\begin{minipage}{0.48\linewidth}
		\centerline{\includegraphics[width=1\textwidth]{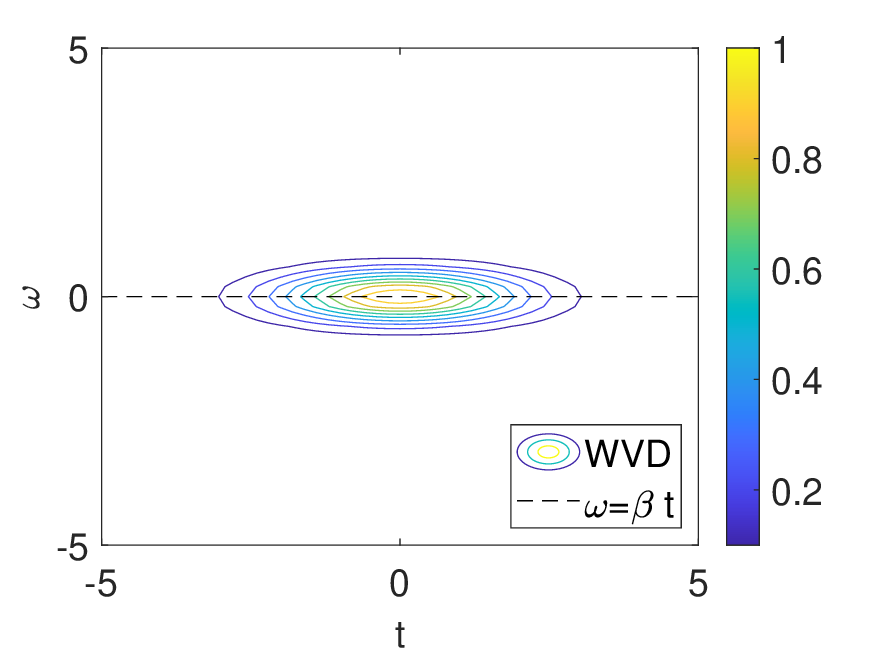}}
		\centerline{(a)}
	\end{minipage}
	\begin{minipage}{0.48\linewidth}
		\centerline{\includegraphics[width=1\textwidth]{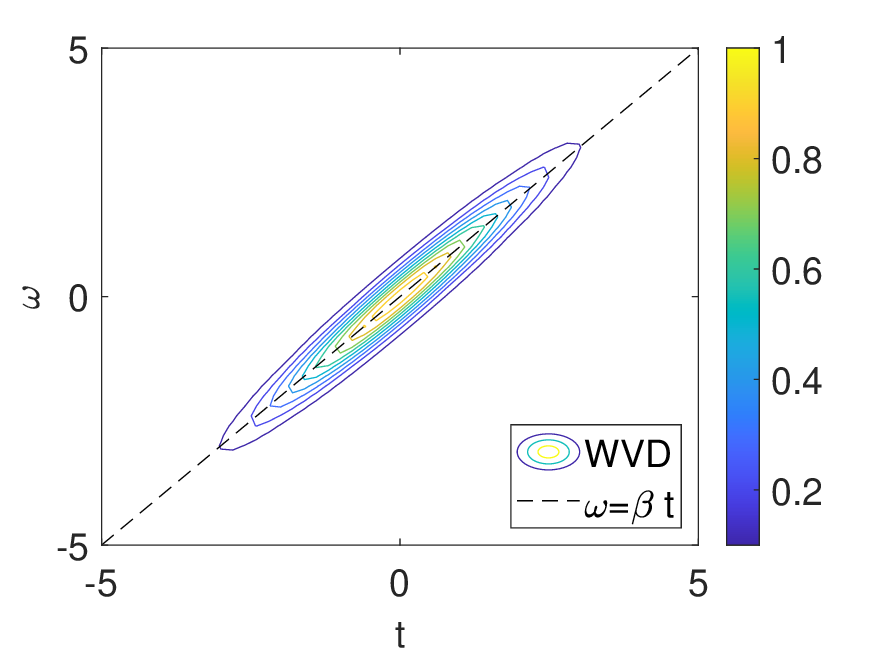}}
		\centerline{(b)}
	\end{minipage}
	\begin{minipage}{0.48\linewidth}
		\centerline{\includegraphics[width=1\textwidth]{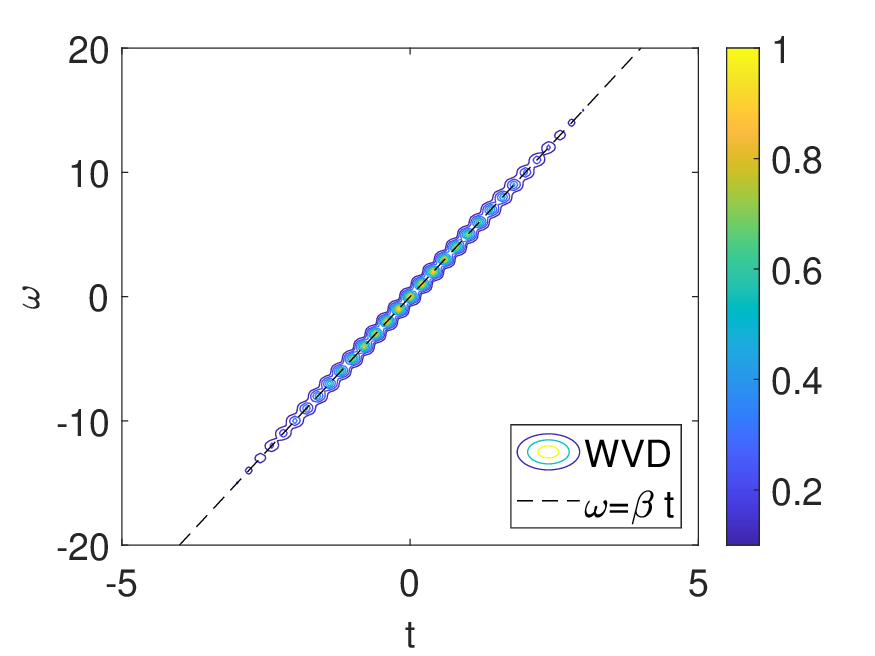}}
		\centerline{(c)}
	\end{minipage}
	\begin{minipage}{0.48\linewidth}
		\centerline{\includegraphics[width=1\textwidth]{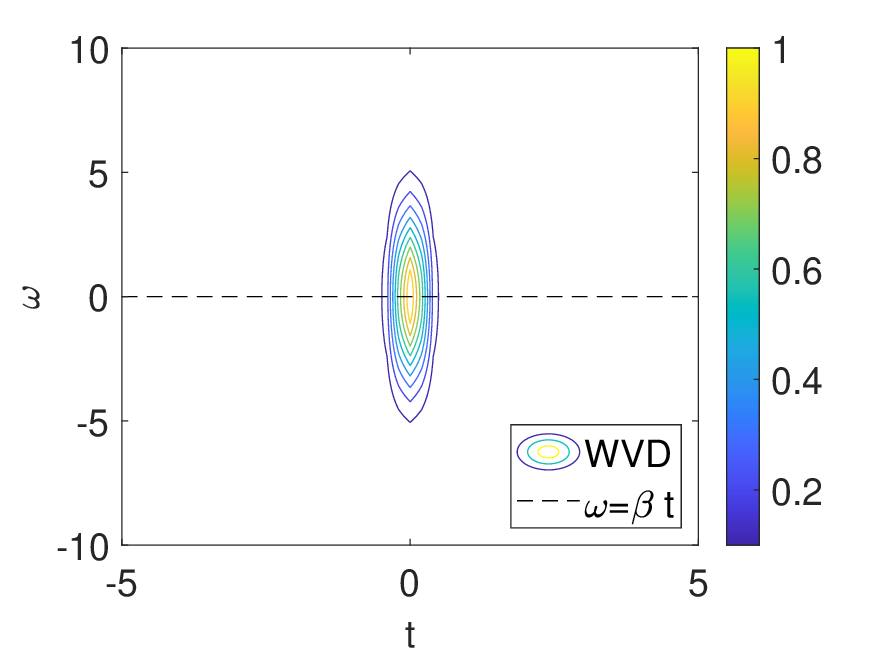}}
		\centerline{(d)}
	\end{minipage}
	\begin{minipage}{0.48\linewidth}
		\centerline{\includegraphics[width=1\textwidth]{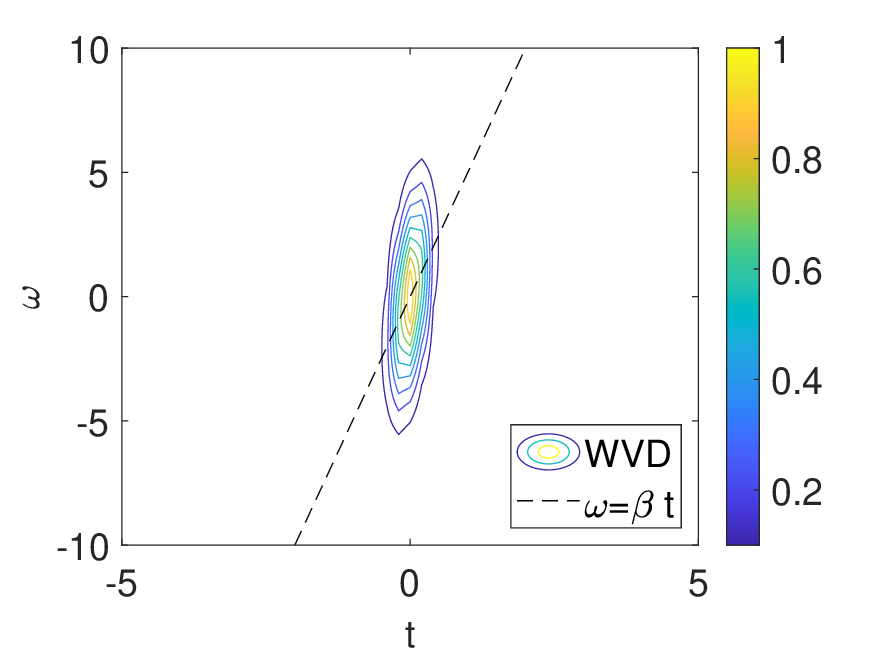}}
		\centerline{(e)}
	\end{minipage}
	\begin{minipage}{0.48\linewidth}
		\centerline{\includegraphics[width=1\textwidth]{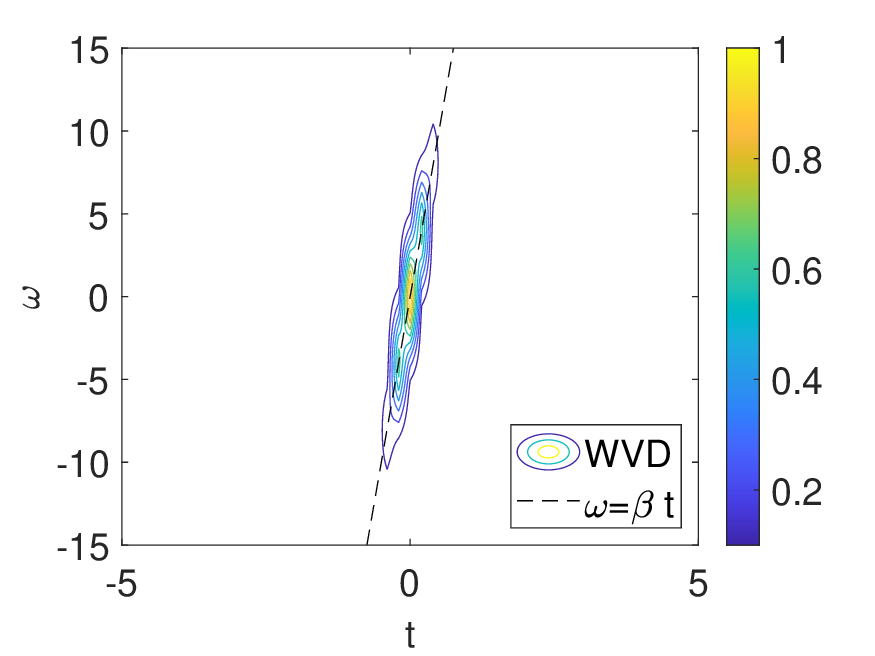}}
		\centerline{(f)}
	\end{minipage}
	\caption{ \small The WVDs of  $h_{\sigma, \beta}(t)$  using  various values of $\sigma$ and $\beta$. (a) The WVD with $\sigma=2$ and $\beta=0$, (b) the WVD with  $\sigma=2$ and $\beta=1$, (c) the WVD with   $\sigma=2$ and $\beta=5$, (d) the WVD with  $\sigma=0.3$ and $\beta=0$,  (e) the WVD with   $\sigma=0.3$ and $\beta=5$, (f) the WVD with    $\sigma=0.3$ and $\beta=20$.}
\end{figure}

Fig. 2 shows the WVDs of the chirp-based  Gaussian  window  using   various values of $\beta$ and $\sigma$.  The first row in Fig. 2  is  the results  corresponding  to a large   variance (i.e., $\sigma=2$)  over different values of $\beta$  ($\beta=0$, $\beta=1$, $\beta=5$). It can be seen from the results that the length of the major axis increases  with the increasing of  $\beta$, and the rotation angle $\theta$  approximately satisfies  the relationship that $ \tan\theta =  \beta$.  While  using  a  small 
$\sigma$ (e.g., $\sigma=0.3$),  the length of the major axis also increases  with the increasing of  $\beta$,  but the range of  rotation angle $\theta$  is limited in  a neighborhood of $\frac{\pi}{2}$ (see Fig. 2(d-f)).

To solve the existing problem of the standard CT, we will  discuss  the performance of a rotation-window CT in the following setion.

\subsection{CT with rotation   window}

In order to  obtain a more canonical TF filter, i.e., parameters $\sigma$ and $\beta$  have a clear role in determining the shape of $W_{h_{\sigma, \beta}}(t,\omega)$, we use  the fractional Fourier transform (FrFT)  [67,68]  to  rotate Gaussian window $g(t)$  with parameter $\beta$, which is given by   
\begin{equation}
	\begin{split}
		\tilde{h}_{\sigma,\beta}(t)=\left( \sqrt{2\pi}\sqrt{\frac{1+\sigma^2\beta^2}{\sigma(1+\beta^2)}}\right) ^{-\frac{1}{2}}\mathrm{e}^{-\frac{\sigma(1+\beta^2)}{2(1+\sigma^2\beta^2)}t^2}\mathrm{e}^{j\frac{\beta(1-\sigma^2) }{2(1+\sigma^2\beta^2)}t^2}.
	\end{split}
\end{equation}
Using this rotation window, the corresponding CT, named rotation-window   CT,  is that  
\begin{equation}
	\begin{split}
		{C}_{f}^{\tilde{h}_{\sigma,\beta}}(t, \omega)=\int_{-\infty}^{+\infty}f(\mu)	\tilde{h}_{\sigma,\beta}(t-\mu)\mathrm{e}^{-j\omega(\mu- t)}\mathrm{d}\mu.
	\end{split}
\end{equation}
Based on Lemma 1,  the squared modulus of  ${C}_{f}^{\tilde{h}_{\sigma,\beta}}(t, \omega)$ can be  expressed as
\begin{equation}
	|{C}_{f}^{\tilde{h}_{\sigma,\beta}}(t, \omega)|^2=\frac{1}{2\pi}W_f\ast\ast W_{\tilde{h}_{\sigma,\beta}}(t, \omega),
\end{equation}
where $W_{\tilde{h}_{\sigma,\beta}}$ is the WVD of $\tilde{h}_{\sigma,\beta}(t)$ as follows:
\begin{equation}
	\begin{split}
		W_{\tilde{h}_{\sigma,\beta}}(t,\omega)=\sqrt{2}\mathrm{e}^{-\frac{t^2}{\hat{\sigma}^2}-\hat{\sigma}^2(\omega -\hat{\beta} t)^2},
	\end{split}
\end{equation}
with  $\hat{\sigma}^2=\frac{1+\sigma^2\beta^2}{\sigma(1+\beta^2)}$, $\hat{\beta}=\frac{\beta(1-\sigma^2)}{1+\sigma^2\beta^2}$.

To discuss parameters $\sigma$ and  $\beta$  are how they affect  the shape of $W_{\tilde{h}_{\sigma,\beta}}(t,\omega)$, we also consider a level curve of $W_{\tilde{h}_{\sigma,\beta}}$,  which is
\begin{equation}
	\begin{split} 
		\frac{t^2}{\hat{\sigma}^2}+\hat{\sigma}^2(\omega -\hat{\beta} t)^2=C, 
	\end{split}
\end{equation}
where $C$ is a  positive constant. The follwing theorem shows the determination of the shape of  ellipse (17) by parameters $\sigma$ and  $\beta$.

\begin{theorem}
	For  ellipse (17), the length $L_l$ of the  long axis is   that 
	\begin{align}
		L_l=\begin{cases}
			\frac{2C}{\sigma},  & \text{if} ~ \sigma<1,\\
			2\sigma C,    &  \text{if} ~  \sigma\geq 1. 
		\end{cases}
	\end{align}	 
	The angle $\theta$ ($\theta \in (-\frac{\pi}{2}, \frac{\pi}{2}]$) between $t$-axis and the  long axis  satisfies
	\begin{align}
		\tan \theta =\begin{cases}
			\beta, & \text{if} ~~ 0<\sigma<1,\\
			-\frac{1}{\beta},  & \text{if} ~ \sigma>1, \beta\neq 0,\\
				+\infty,  & \text{if} ~ \sigma>1, \beta = 0.		
		\end{cases}
	\end{align}	 
\end{theorem}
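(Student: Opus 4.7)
The plan is to deduce Theorem 2 from Theorem 1 via a direct parameter substitution, because the WVD $W_{\tilde{h}_{\sigma,\beta}}(t,\omega)$ in (26) has exactly the same form as $W_{h_{\sigma,\beta}}(t,\omega)$ in (8), with $\sigma$ replaced by $\hat{\sigma}$ and $\beta$ replaced by $\hat{\beta}$. Therefore ellipse (27) is literally ellipse (9) re-parametrised, and Theorem 1 hands us a closed form for its long-axis length $L_l$ and orientation $\tan\theta$ in the hatted variables. What remains is purely algebraic: re-express those formulas in terms of the original $(\sigma,\beta)$ using $\hat{\sigma}^2=(1+\sigma^2\beta^2)/[\sigma(1+\beta^2)]$ and $\hat{\beta}=\beta(1-\sigma^2)/(1+\sigma^2\beta^2)$, and then split into cases according to the sign of $1-\sigma^2$.

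The first concrete step I would take is to compute the two simplified ingredients that appear in the formulas from Theorem 1, namely $\hat{A}\hat{\sigma}^2$ (with $\hat{A}=1+\hat{\beta}^2-\hat{\sigma}^{-4}$) and the discriminant $\hat{A}^2\hat{\sigma}^4+4\hat{\beta}^2$. Putting everything over the common denominator $\sigma(1+\beta^2)(1+\sigma^2\beta^2)$ and expanding, the numerator of $\hat{A}\hat{\sigma}^2$ factors as $(1-\sigma^2)(1+\beta^2)(1-\sigma^2\beta^2)$, giving
\begin{equation*}
\hat{A}\hat{\sigma}^2=\frac{(1-\sigma^2)(1-\sigma^2\beta^2)}{\sigma(1+\sigma^2\beta^2)}.
\end{equation*}
The discriminant collapses via the identity $(a-b)^2+4ab=(a+b)^2$ (with $a=1$, $b=\sigma^2\beta^2$) to $\hat{A}^2\hat{\sigma}^4+4\hat{\beta}^2=(1-\sigma^2)^2/\sigma^2$, so $\sqrt{\hat{A}^2\hat{\sigma}^4+4\hat{\beta}^2}=|1-\sigma^2|/\sigma$. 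This absolute value is exactly what forces the case split in the statement of the theorem.

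Substituting these two expressions together with $2/\hat{\sigma}^2=2\sigma(1+\beta^2)/(1+\sigma^2\beta^2)$ into Theorem 1's formula for $L_l$ and collecting terms on the common denominator $\sigma(1+\sigma^2\beta^2)$ gives, after routine cancellation, $2(1+\sigma^2\beta^2)$ in the numerator when $\sigma<1$ and $2\sigma^2(1+\sigma^2\beta^2)$ when $\sigma\geq 1$, which yields $L_l=2C/\sigma$ and $L_l=2\sigma C$ respectively. An entirely analogous simplification of the formula $\tan\theta=2\hat{\beta}[\hat{\sigma}^2(\sqrt{\hat{A}^2\hat{\sigma}^4+4\hat{\beta}^2}+\hat{A}\hat{\sigma}^2)]^{-1}+\hat{\beta}$ produces $\tan\theta=\beta$ for $\sigma<1$, the trivial value $0$ for $\sigma=1$ (since $\hat{\beta}=0$), and $\tan\theta=-1/\beta$ for $\sigma>1$.

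The main obstacle is less conceptual than bookkeeping: the sign of $1-\sigma^2$ appears inside a square root and dictates everything, so one must be disciplined about case splits and about factoring $(1-\sigma^2)$ versus $(\sigma^2-1)$. A secondary subtlety is that Theorem 1's proof explicitly assumed $\beta\geq 0$, whereas $\hat{\beta}=\beta(1-\sigma^2)/(1+\sigma^2\beta^2)$ changes sign at $\sigma=1$; I would handle this either by invoking the symmetry $(t,\omega)\mapsto(-t,-\omega)$ of the ellipse (which lets Theorem 1's formulas be used for either sign of $\hat{\beta}$, interpreting $\tan\theta$ as the slope of the long-axis line) or by replicating Theorem 1's optimisation with $\hat{\beta}<0$, in which case the maximiser flips to the vertex with negative $t$-coordinate and the slope formula acquires the expected sign, producing $-1/\beta$ in the $\sigma>1$ branch.
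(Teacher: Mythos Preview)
Your proposal is correct and follows essentially the same route as the paper: apply Theorem 1 to the hatted parameters, compute $\hat{A}\hat{\sigma}^2=\frac{(1-\sigma^2)(1-\sigma^2\beta^2)}{\sigma(1+\sigma^2\beta^2)}$ and $\sqrt{\hat{A}^2\hat{\sigma}^4+4\hat{\beta}^2}=|1-\sigma^2|/\sigma$, and substitute into (10) and (18). Your write-up is in fact tidier than the paper's in two respects: you make explicit that it is the absolute value $|1-\sigma^2|$ that drives the case split, and you flag the sign issue with $\hat{\beta}$ when $\sigma>1$ (Theorem~1 was derived under $\beta\ge 0$), which the paper glosses over.
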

\begin{proof}
	See the  Supplementary Materials-\uppercase\expandafter{\romannumeral2}.
\end{proof}

For the rotation window $\tilde{h}_{\sigma,\beta}(t)$, Theorem 2 proves that  the parameter $\sigma$  determines the size  of the  WVD of $\tilde{h}_{\sigma,\beta}(t)$, and the parameter $\beta$ determines  its  inclined direction. This property makes the role of $\beta$ and $\sigma$ more clear and is helpful for parameter setting.  Further explanation is given from numerical experiments,  which are  presented in  Fig. 3.  From the results,  we can see that the size of  the  WVDs  is unchangeable  when  fixing the value of $\sigma$.  Moreover, when using different values of $\beta$,  the rotation  of  the WVD  of  $\tilde{h}_{\sigma,\beta}(t)$  satisfies the relation (19) given in Theorem 2. 	
\begin{figure}[!tbh]
	\vspace{-0.1 cm}
	\setlength{\belowcaptionskip}{-0.1 cm}
	\centering
	\begin{minipage}{0.48\linewidth}
		\centerline{\includegraphics[width=1\textwidth]{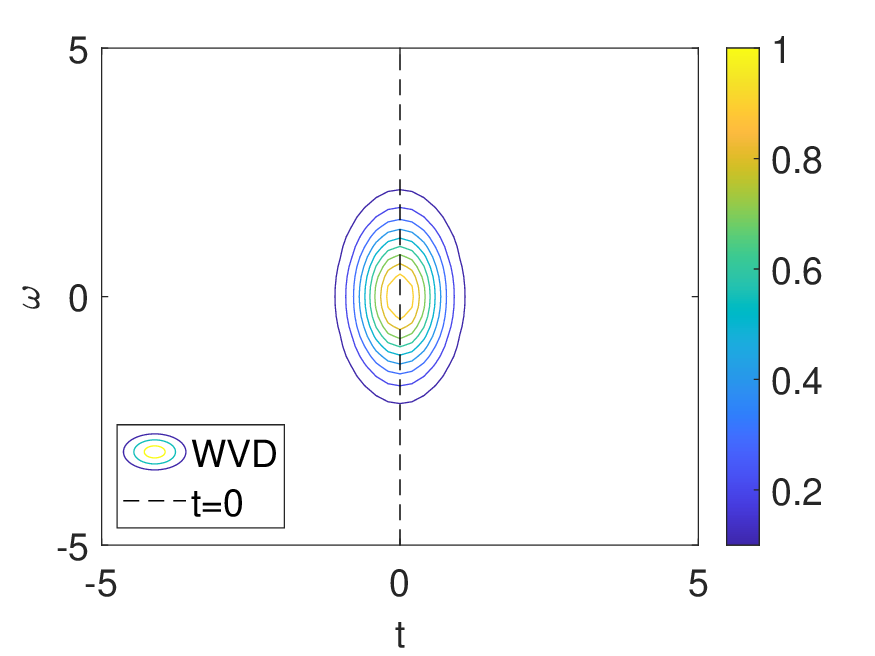}}
		\centerline{(a)}
	\end{minipage}
	\begin{minipage}{0.48\linewidth}
		\centerline{\includegraphics[width=1\textwidth]{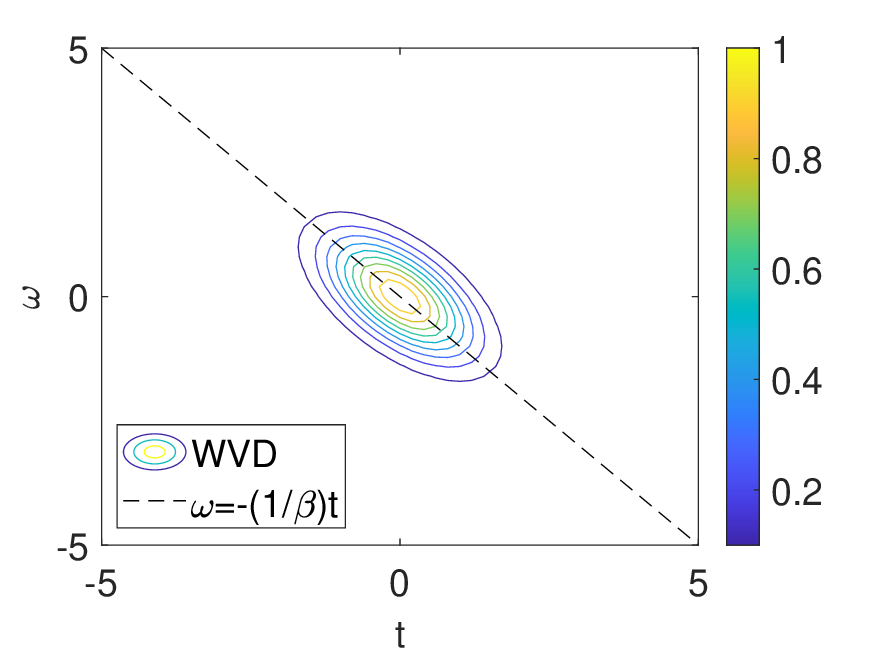}}
		\centerline{(b)}
	\end{minipage}
	\begin{minipage}{0.48\linewidth}
		\centerline{\includegraphics[width=1\textwidth]{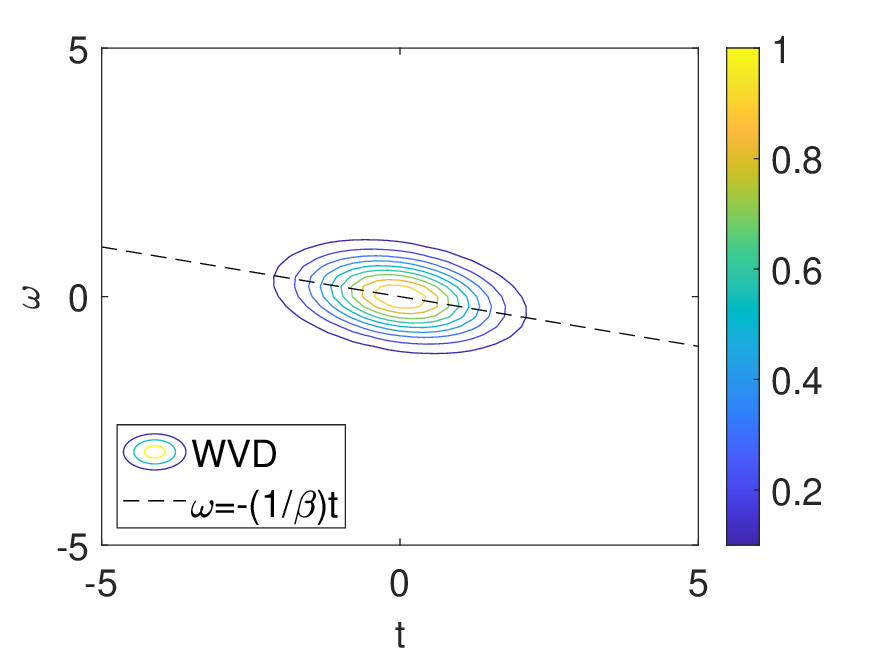}}
		\centerline{(c)}
	\end{minipage}
	\begin{minipage}{0.48\linewidth}
		\centerline{\includegraphics[width=1\textwidth]{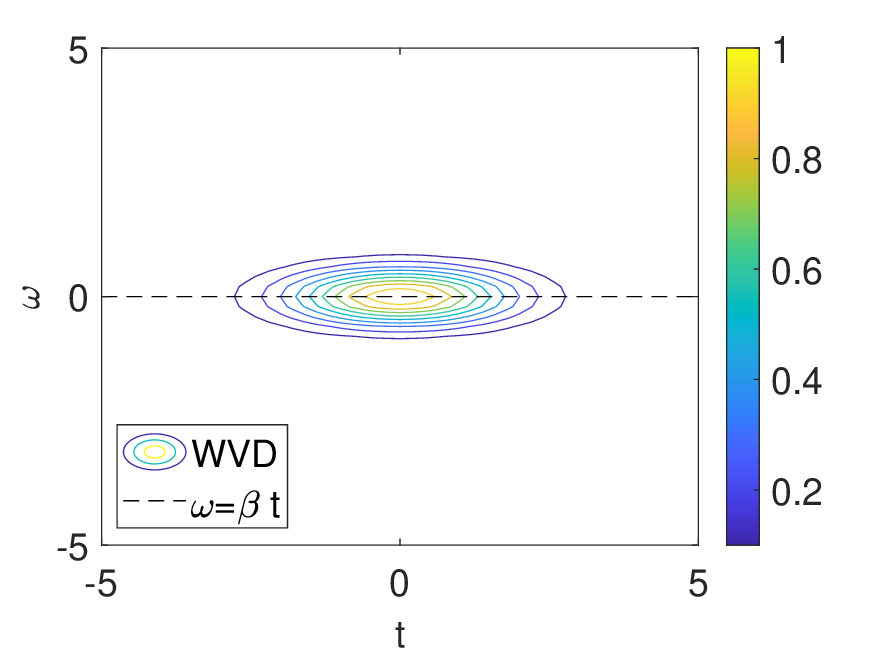}}
		\centerline{(d)}
	\end{minipage}
	\begin{minipage}{0.48\linewidth}
		\centerline{\includegraphics[width=1\textwidth]{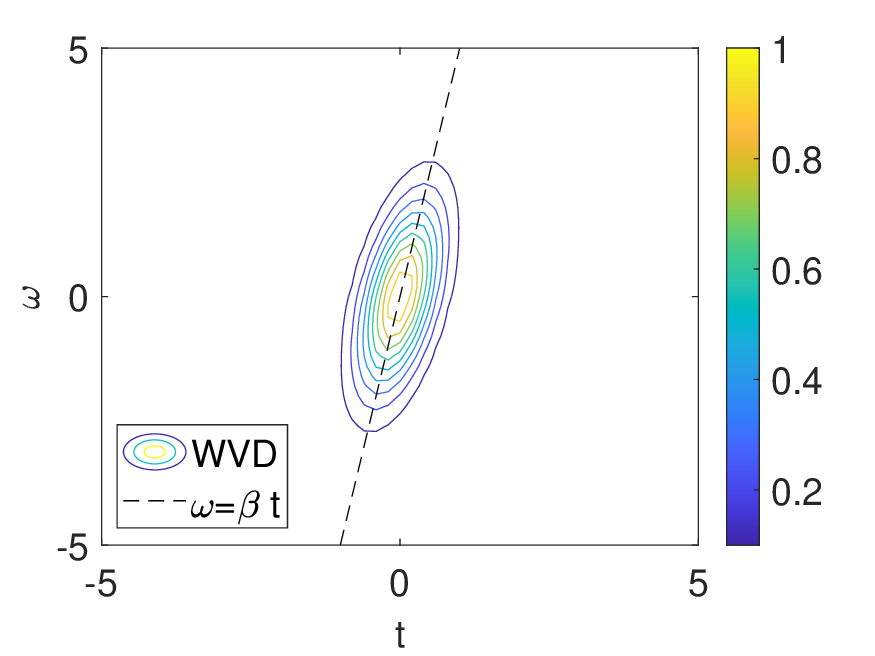}}
		\centerline{(e)}
	\end{minipage}
	\begin{minipage}{0.48\linewidth}
		\centerline{\includegraphics[width=1\textwidth]{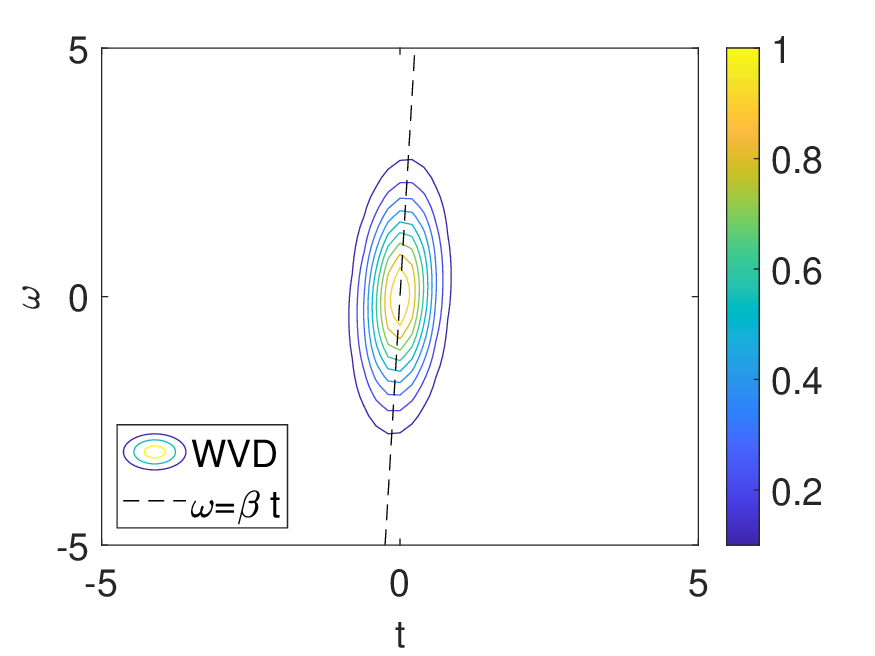}}
		\centerline{(f)}
	\end{minipage}
	\caption{ \small The WVDs of  $\tilde{h}_{\sigma,\beta}(t)$  using  various values of $\sigma$ and $\beta$. (a) The WVD with $\sigma=2$ and $\beta=0$, (b) the WVD with  $\sigma=2$ and $\beta=1$, (c) the WVD with   $\sigma=2$ and $\beta=5$, (d) the WVD with  $\sigma=0.3$ and $\beta=0$,  (e) the WVD with   $\sigma=0.3$ and $\beta=5$, (f) the WVD with    $\sigma=0.3$ and $\beta=20$.}
\end{figure}
\begin{figure*}[!tbh]
		\vspace{-0.cm}
	\setlength{\belowcaptionskip}{-0.cm}
	\centering
	\begin{minipage}{0.7\linewidth}
		\centerline{\includegraphics[width=1\textwidth]{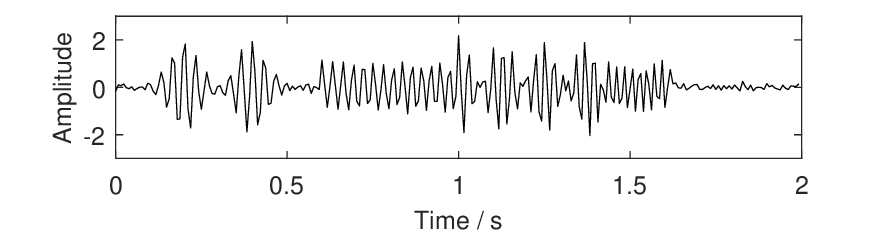}}
		\centerline{(a)}
	\end{minipage}
	\begin{minipage}{0.322\linewidth}
		\centerline{\includegraphics[width=1\textwidth]{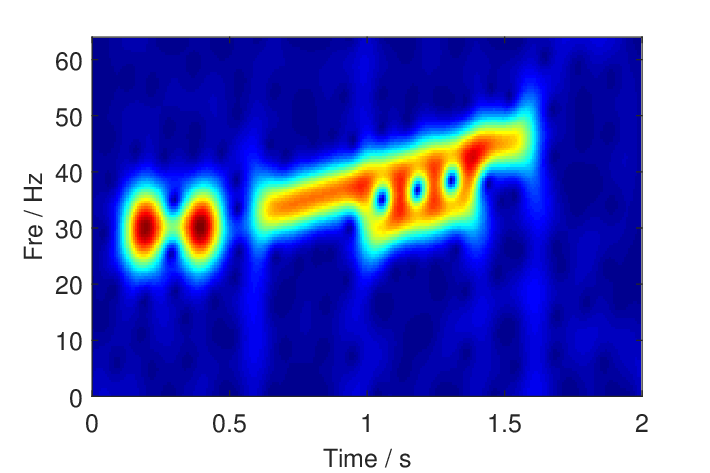}}
		\centerline{(b)}
	\end{minipage}
	\begin{minipage}{0.322\linewidth}
		\centerline{\includegraphics[width=1\textwidth]{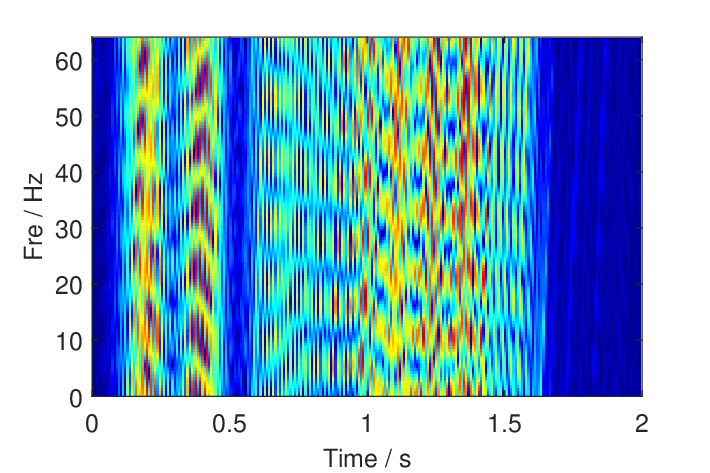}}
		\centerline{(c)}
	\end{minipage}
	\begin{minipage}{0.322\linewidth}
		\centerline{\includegraphics[width=1\textwidth]{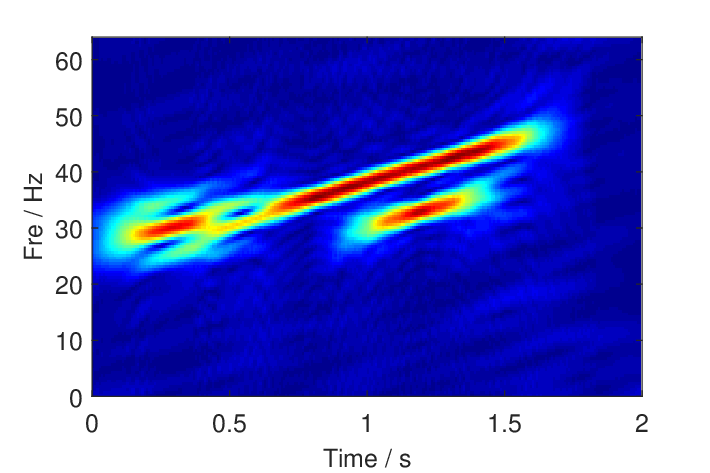}}
		\centerline{(d)}
	\end{minipage}
	\begin{minipage}{0.322\linewidth}
		\centerline{\includegraphics[width=1\textwidth]{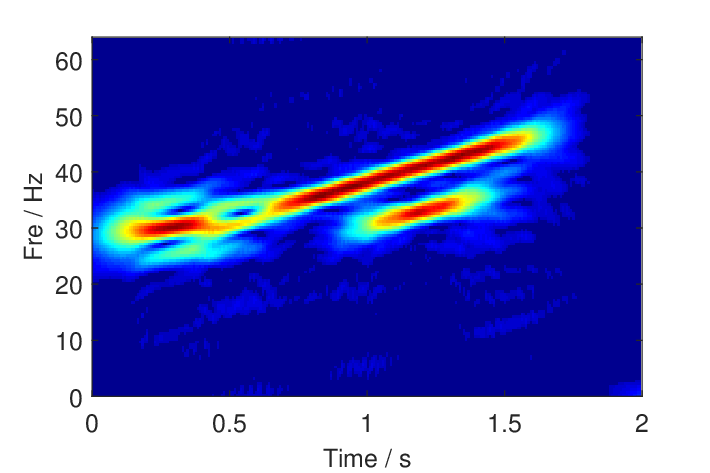}}
		\centerline{(e)}
	\end{minipage}
	\begin{minipage}{0.322\linewidth}
		\centerline{\includegraphics[width=1\textwidth]{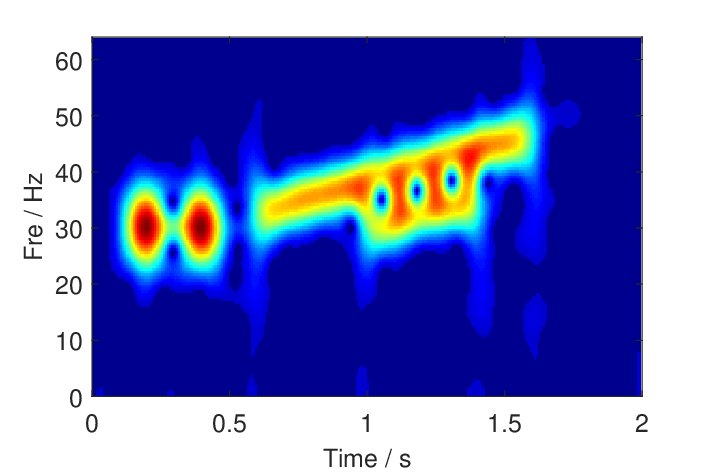}}
		\centerline{(f)}
	\end{minipage}
	\begin{minipage}{0.322\linewidth}
		\centerline{\includegraphics[width=1\textwidth]{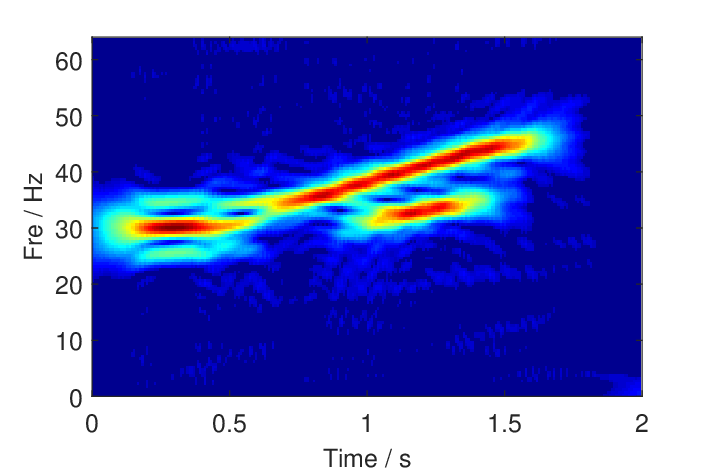}}
		\centerline{(g)}
	\end{minipage}
	\caption{ \small Evaluation of TF resolution on a known signal structure. (a) Waveform of the test signal, (b) CT with  $\sigma=0.1$ and $\beta=7$, (c) CT with $\sigma=0.1$ and $\beta=1000$, (d) CT with  $\sigma=0.3$ and $\beta=7$, (e) the rotation-window CT with $\sigma=0.1$ and $\beta=7$, (f) the rotation-window CT with $\sigma=0.1$ and $\beta=1000$, (g) the rotation-window CT with $\sigma=0.3$ and $\beta=7$.}
\end{figure*}

\subsection{Comparison}

The section above mainly discusses the effect of the chirp-based  Gaussian  window and the  rotating Gaussian window  on  the CT from theory, which provides important theoretical guidance for CT to choose suitable parameters so as to  obtain a high-resolution TF representation.  In the following, we test the performance of CT and rotation-window CT by  an example. The waveform of the test signal is illustrated in  Fig. 4(a), which contains two chirp signals (CR$=$7) and two impulse  signals. We compute the TF representations of this test  signal using CT and rotation-window CT with different parameters (see Fig. 4(b-g)). From the results, we can see that the  CT can fully separate  the two  impulse signals when employing a smaller value of $\sigma$, but a small one  results in  a poor frequency resolution such that the TF  mixing of chirp  components occurs (see Fig. 4(b)).  A relatively large $\sigma$ following with  good  CR estimation can resolve the two chirps, and yet obscures   the TF information of the two impulses (see Fig. 4(d)).   Moreover, using  a large $\beta$ to match the fast-varying feature of impulse  signals degrades the TF performance of the CT, as presented in Fig. 4(c).  By contrast, it follows that the  rotation-window CT provides  a better  time resolution when  using a small  $\sigma$ and a  large $\beta$ (see Fig. 4(f)), and can separate   the two  chirps   even  with a small  $\sigma$ (see Fig. 4(e)).   In fact, an advantage of the rotation-window  CT  over the standard CT is that the  parameters in  the  window  become relative, forming  effective  combinations  such that the window shape can  rotate  along any direction  in the TF plane only by adjusting  the  parameter  $\beta$.

Nevertheless, the  rotation-window CT  is actually the standard  CT  with a special parameter combination (i.e., $\hat{\sigma}$ and  $\hat{\beta}$),  thus it is still  limited by  the Heisenberg uncertainty principle and can not localize precisely a signal in both time and frequency. As presented in Fig. 4,  the window of these two methods  matches the pulse (chirp) components well but smears the chirp (pulse) components.   This illustrates the fundamental tradeoff of the CT: it  is difficult  to get both good time and good frequency resolution using a single fixed window.

\section{Multi-resolution  chirplet transform}
To overcome  the TF resolution problem of the CT, more effective possibilities  should be considered.    
The following section  utilizes a combination of multiple CTs with different parameter combinations to improve the TF resolution of the CT, as it can localize the signal in both time and frequency better than it is possible with any single CT.  Moreover,  the combined one  can  reduce the computational complexity of adaptive CTs in which the window function is adjusted at every  TF  location. 

\subsection{Multi-resolution chirplet transform}
Let us  consider a two-component non-stationary signal 
including  an impulse signal and a chirp signal, which is given by 
\begin{equation}
	\begin{split}
		&f(t)=f_1(t)+f_2(t),\\&f_1(t)=A_1\delta(t-t_0), ~f_2(t)=A_2(t) \mathrm{e}^{j(at+\frac{b}{2}t^2)},
	\end{split}
\end{equation}
where  $A_1>0$, $A_2(t)\geq 0$, $a, b\in\mathbb{R}$,  and $\delta(\cdot)$ is the  Dirac delta function.

Assuming  that around  $\mu=t$  the amplitude  $A_2(t)$ is  slowly varying and can be  well approximated  by $A_2(t+\mu)\approx A_2(t)$, then the CT of signal (20) under the Gaussian window $g(t)={(\sqrt{2\pi}\sigma)}^{-1}\mathrm{e}^{-\frac{t^2}{2\sigma^2}}$ is computed as
\begin{align} 
	&C_{f}^{h_{\sigma,\beta}}(t,\omega)=\int_{-\infty}^{+\infty}f(\mu)g^*(t-\mu)\mathrm{e}^{-j\frac{\beta}{2}(\mu- t)^2}\mathrm{e}^{-j\omega(\mu- t)}\mathrm{d}\mu \nonumber\\
	&={(\sqrt{2\pi}\sigma)}^{-1}\int_{-\infty}^{+\infty}f_1(t+\mu)\mathrm{e}^{-\frac{\mu^2}{2\sigma^2}}\mathrm{e}^{-j\frac{\beta}{2}\mu^2}\mathrm{e}^{-j\omega\mu}\mathrm{d}\mu\nonumber\\&+{(\sqrt{2\pi}\sigma)}^{-1}\int_{-\infty}^{+\infty}f_2(t+\mu)\mathrm{e}^{-\frac{\mu^2}{2\sigma^2}}\mathrm{e}^{-j\frac{\beta}{2}\mu^2}\mathrm{e}^{-j\omega\mu}\mathrm{d}\mu\nonumber\\
	&={(\sqrt{2\pi}\sigma)}^{-1}(A_1\mathrm{e}^{-\frac{(t_0-t)^2}{2\sigma^2}}\mathrm{e}^{-j\frac{\beta}{2}(t_0-t)^2}\mathrm{e}^{-j\omega(t_0-t)}\nonumber+\\&\mathrm{e}^{j(at+\frac{b}{2}t^2)}\int_{-\infty}^{+\infty}A_2(t+\mu)\mathrm{e}^{-\frac{\mu^2}{2\sigma^2}}\mathrm{e}^{-j\frac{(\beta-b)}{2}\mu^2}\mathrm{e}^{-j(\omega-a-bt)\mu}\mathrm{d}\mu)\nonumber\\
	&	\approx {(\sqrt{2\pi}\sigma)}^{-1}A_1\mathrm{e}^{-\frac{(t_0-t)^2}{2\sigma^2}}\mathrm{e}^{-j\frac{\beta}{2}(t_0-t)^2}\mathrm{e}^{-j\omega(t_0-t)}\nonumber\\& +{(\sqrt{2\pi}\sigma)}^{-1}f_2(t)\int_{-\infty}^{+\infty}\mathrm{e}^{-\frac{\mu^2}{2\sigma^2}}\mathrm{e}^{-j\frac{(\beta-b)}{2}\mu^2}\mathrm{e}^{-j(\omega-a-bt)\mu}\mathrm{d}\mu\nonumber\\
	&\xlongequal[]{\beta=b}  {(\sqrt{2\pi}\sigma)}^{-1} A_1\mathrm{e}^{-\frac{(t_0-t)^2}{2\sigma^2}}\mathrm{e}^{-j\frac{b}{2}(t_0-t)^2}\mathrm{e}^{-j\omega(t_0-t)}\nonumber\\& +f_2(t) \mathrm{e}^{-\frac{\sigma^2}{2}(\omega-a-bt)^2}.
\end{align} 
If  $f_1(t)$ and $f_2(t)$ are further separable, e.g.,  $A_2(t)\approx0$ when  $t\in [t_0-\sigma, t_0+\sigma]$,  then the modulus of  CT   can be expressed as 
\begin{align} 
	&	|C_{f}^{h_{\sigma,b}}(t, \omega)| \nonumber\\&\approx{(\sqrt{2\pi}\sigma)}^{-1}A_1\mathrm{e}^{-\frac{1}{2\sigma^2}(t_0-t)^2} +A_2(t) \mathrm{e}^{-\frac{\sigma^2}{2}(\omega-a-bt)^2}.
\end{align} 
From (22),  we can know that high time-resolution can be achieved for $f_1(t)$ with a small value of $\sigma$,  but that causes a degradation of frequency resolution. Conversely,  a large $\sigma$ increases  frequency resolution  for $f_2(t)$ but at the expense of temporal resolution of the CT of  $f_1(t)$ (see Fig. 4(b,d) for numerical illustration).   

To overcome this problem, let us consider  the product of two  CTs with  different parameters $\sigma_1$  and $\sigma_2$, which is calculated as 
\begin{align} 
	&|C_{f}^{h_{\sigma_1,b}}(t, \omega)\times C_{f}^{h_{\sigma_2,b}}(t, \omega)|\approx \frac{1}{2\pi \sigma_1\sigma_2}A_1^2\mathrm{e}^{-\frac{\sigma_1^2+\sigma_2^2}{2\sigma_1^2\sigma_2^2}(t_0-t)^2}\nonumber\\& + A_2(t)^2 \mathrm{e}^{-\frac{\sigma_1^2+\sigma_2^2}{2}(\omega-a-bt)^2}.
\end{align} 
Comparing (22) with (23), we find that the  product  has better time resolution for $f_1(t)$ and  increased frequency resolution for $f_2(t)$  (here we can  suppose  $\sigma$ and  $\sigma_1$ (or $\sigma$ and $\sigma_2$) take  the same value). To make this point more clear,   we give the combined CT of the test signal (see Fig. 4(a)) in Fig. 5(a).  Obviously, the  combined result  achieves a sharper  TF representation than that by  the single CT with $\sigma=0.1$ or $\sigma=0.3$, as  presented in Fig. 5(a) and Fig. 4(b,d).   
\begin{figure}[!tbph]
	\centering
	\begin{minipage}{0.493\linewidth}
		\centerline{\includegraphics[width=1\textwidth]{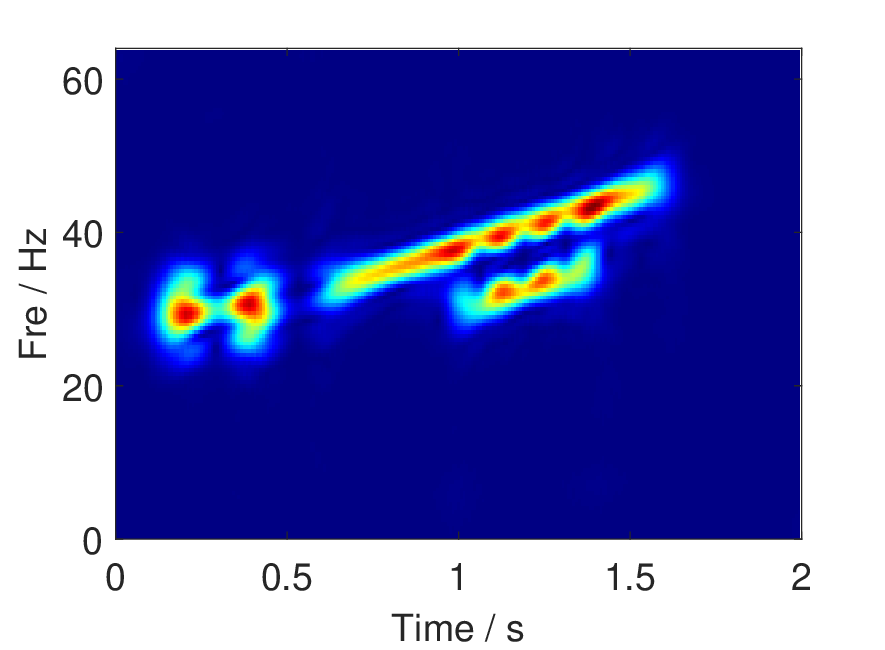}}
		\centerline{(a)}
	\end{minipage}
	\begin{minipage}{0.493\linewidth}
		\centerline{\includegraphics[width=1\textwidth]{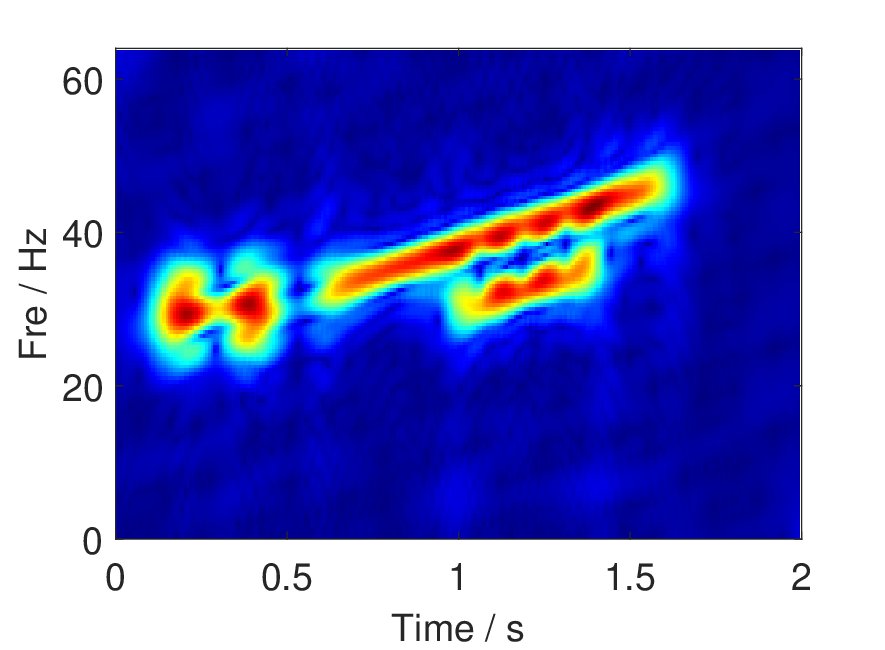}}
		\centerline{(b)}
	\end{minipage}
	\caption{ \small The results of the combined  CT  with $\sigma=0.1$ and $\sigma=0.3$ (here $\beta=7$). (a) The product of the two CTs, (b) the  geometric  mean of the two CTs. }
\end{figure}

Unlike the adaptive CTs [56,59],  the combined  CT  only uses two parameters $\sigma_1$  and $\sigma_2$ to yield a high-resolution result, thus it reduces the computational complexity effectively. Meanwhile, it is worth pointing out that,  when computing the product of multiple CTs to increase  TF resolution, one existing problem  is that the magnitude of the combined TF representation may become increasingly polarized with the increase of the number of  product-times, i.e., a small  TF energy becomes smaller and smaller,  and a large one is getting bigger and bigger.  Therefore it is undesirable to multiply too many  CTs for the combined  CT.  An effective way to solve the magnitude problem  is to compute its  geometric mean [5,50,51]. However, the mean step inevitably reduces  time and frequency resolution of the TF representation (see Fig. 5(b)). Nevertheless, we can derive the following conclusion based on (22) and (23),  which shows the superiority of the combined  CT over any single CT.
\begin{theorem}
For  signal  (20),  given   any $\sigma_1>0$,  $\sigma_2>0$, $\sigma_1\neq \sigma_2$, there does not exist $\sigma>0$ such that  $|C_{f}^{h_{\sigma,b}}(t, \omega)|$ has the same time  and frequency resolution as $|C_{f}^{h_{\sigma_1,b}}(t, \omega)\times C_{f}^{h_{\sigma_2,b}}(t, \omega)|^{\frac{1}{2}}$.
\end{theorem}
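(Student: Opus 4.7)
The plan is to read off the time- and frequency-widths of both transforms directly from the Gaussian envelopes in (37)--(38) and then to show that requiring a single $\sigma$ to match both widths forces $\sigma_1=\sigma_2$ via the AM--HM inequality. From (37) with $\beta=b$, the single-parameter CT localizes the transient $f_1$ around $t=t_0$ through $e^{-(t_0-t)^2/(2\sigma^2)}$ (time-width $\sigma$) and localizes the chirp $f_2$ around the ridge $\omega=a+bt$ through $e^{-\sigma^2(\omega-a-bt)^2/2}$ (frequency-width $1/\sigma$), so the two resolutions of $|C_f^{h_{\sigma,\beta}}|$ are controlled by the single parameter $\sigma$ in an inversely coupled way.

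For the combined transform I would start from (38): under the separability assumption already used to justify (37)--(38), the cross terms in $C_f^{h_{\sigma_1,b}}\cdot C_f^{h_{\sigma_2,b}}$ are negligible, the $f_1$-region carries the quadratic exponent $-(\sigma_1^2+\sigma_2^2)(t_0-t)^2/(2\sigma_1^2\sigma_2^2)$, and the $f_2$-region carries $-(\sigma_1^2+\sigma_2^2)(\omega-a-bt)^2/2$. Taking the square root halves each quadratic coefficient. Matching the time coefficient against the single-$\sigma$ CT forces $\sigma^2 = 2\sigma_1^2\sigma_2^2/(\sigma_1^2+\sigma_2^2)$, i.e.\ the harmonic mean of $\sigma_1^2,\sigma_2^2$, while matching the frequency coefficient forces $\sigma^2 = (\sigma_1^2+\sigma_2^2)/2$, the arithmetic mean. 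By AM--HM these two values of $\sigma^2$ coincide iff $\sigma_1^2=\sigma_2^2$, which is excluded by hypothesis; hence no $\sigma>0$ can simultaneously reproduce both resolutions produced by the geometric-mean combination.

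The main place that warrants care is not the algebra but fixing the definition of ``same time and frequency resolution.'' The reading consistent with the surrounding text is to compare the Gaussian standard deviations (equivalently, the quadratic exponents) of the $f_1$- and $f_2$-envelopes separately, and I would state this explicitly at the start of the proof. Under any of the usual width surrogates, such as the $-3$\,dB width or the half-maximum width, both matching conditions rescale by a common positive factor, so the AM-versus-HM dichotomy is preserved and the same contradiction goes through.
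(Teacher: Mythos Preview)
Your proposal is correct and is essentially the same argument as the paper's: both read off the Gaussian widths from (37)--(38) and reduce the impossibility to the strict inequality between the arithmetic and harmonic means of $\sigma_1^2,\sigma_2^2$. The paper phrases it as ``fix the frequency match $\sigma^2=(\sigma_1^2+\sigma_2^2)/2$ and then observe the combined time-width $\sqrt{2}\sigma_1\sigma_2/\sqrt{\sigma_1^2+\sigma_2^2}$ is strictly smaller,'' which is exactly your AM--HM comparison written out; your explicit identification of the two matching conditions as the arithmetic and harmonic means is slightly cleaner, and your remark that the conclusion is invariant under any common rescaling of the width surrogate is a useful addition the paper leaves implicit.
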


The proof of Theorem 3  is evident  based on (22) and (23).  If we assume  $|C_{f}^{h_{\sigma,b}}(t, \omega)|$ has the same  frequency resolution as $|C_{f}^{h_{\sigma_1,b}}(t, \omega)\times C_{f}^{h_{\sigma_2,b}}(t, \omega)|^{\frac{1}{2}}$ for $f_2(t)$, which means that $\sigma^2=\frac{\sigma_1^2+\sigma_2^2}{2}$. In this case,  the duration of  $|C_{f}^{h_{\sigma,b}}(t, \omega)|$ for $f_1(t)$ is $\sqrt{\frac{\sigma_1^2+\sigma_2^2}{2}}$. While the time spread   of $|C_{f}^{h_{\sigma_1,b}}(t, \omega)\times C_{f}^{h_{\sigma_2,b}}(t, \omega)|^{\frac{1}{2}}$  for $f_1(t)$  is $\frac{\sqrt{2}\sigma_1\sigma_2}{\sqrt{\sigma_1^2+\sigma_2^2}}$ satisfying $\frac{\sqrt{2}\sigma_1\sigma_2}{\sqrt{\sigma_1^2+\sigma_2^2}}<\sqrt{\frac{\sigma_1^2+\sigma_2^2}{2}}$ for $\sigma_1\neq \sigma_2$, which means that $|C_{f}^{h_{\sigma_1,b}}(t, \omega)\times C_{f}^{h_{\sigma_2,b}}(t, \omega)|^{\frac{1}{2}}$  has a better    time  resolution    than  $|C_{f}^{h_{\sigma,b}}(t, \omega)|$  for $f_1(t)$  if  they have the same  frequency resolution  for $f_2(t)$. Similarly, we can prove that  $|C_{f}^{h_{\sigma_1,b}}(t, \omega)\times C_{f}^{h_{\sigma_2,b}}(t, \omega)|^{\frac{1}{2}}$  provides a better    frequency resolution  than  $|C_{f}^{h_{\sigma,b}}(t, \omega)|$   for $f_2(t)$   if  they have the same time resolution  for $f_1(t)$. Furthermore, when specifying $\sigma_2=2\sigma_1$,  the product of the time pread   of  $f_1(t)$   and  the  bandwidth of  $f_2(t)$   can  be computed approximately as $\frac{2}{5}$  for  $|C_{f}^{h_{\sigma_1,b}}(t, \omega)\times C_{f}^{h_{\sigma_2,b}}(t, \omega)|^{\frac{1}{2}}$, smaller than the value of  $|C_{f}^{h_{\sigma,b}}(t, \omega)|$ that corresponds to $\frac{1}{2}$.   This theorem  in fact shows the  combined multi-resolution CT  resolves  the joint TF density better than it is possible with a single CT representation in dealing with signal (20). The numerical experiment also verifies the theoretical results, one can see  Fig. 4(b,d)  and  Fig. 5.  We should also highlight the fact that the combined procedure does  not  break the uncertainty principle completely,  because there is still energy spread in the combined TF distribution.

In practical application,  most  signals  contain multiple components with time-varying information, so we   should  extend  the combined CT based on two CTs to multi-resolution  case. 

Define $S_{\sigma,\beta}^K$ as a set of $h_{\sigma,\beta}(t)$  with different $\sigma$ and $\beta$:
\begin{align} 
	S_{\sigma,\beta}^m=\left\lbrace h_{\sigma,\beta}(t)| (\sigma, \beta) \in \{(\sigma_1, \beta_1), (\sigma_2, \beta_2), \cdots, (\sigma_m, \beta_m)\} \right\rbrace,
\end{align} 
where $m$  is a positive integer denoting the number of chirp-based  Gaussian windows.  The multi-resolution chirplet transform (MrCT)  of  a signal $f(t)$  with regard to $S_{\sigma,\beta}^K$ is then defined as
\begin{align} 
	MrC_{f}(t, \omega)&=\left( \prod_{i=1}^mC_{f}^{h_{\sigma_i,\beta_i}}(t, \omega)\right) ^{\frac{1}{m}},
\end{align} 
MrCT is to  compute the geometric  mean  of $m$ CTs,  which corresponds to  the optimal solution in the Kullback-Leibler divergence [69] sense.
\begin{theorem}
	Given  $m$ CTs: $C_{f}^{h_{\sigma_i,\beta_i}}(t, \omega)$, $i=1,2,\cdots,m$,  the  $|MrC_{f}(t, \omega)|$ is the optimal solution of the following optimization problem:
	\begin{align} 
		\min_{P(t, \omega)\geq 0}  \sum_{i=1}^{m}\| P(t,\omega)- |C_{f}^{h_{\sigma_i,\beta_i}} (t,\omega)|\|_{GKL}, 
	\end{align} 
	where $\| P(t,\omega)- Q(t,\omega)\|_{GKL}= \int \int P(t,\omega) \ln \frac{P(t,\omega)}{Q(t,\omega)} -P(t,\omega)+Q(t,\omega) \mathrm{d}t \mathrm{d}\omega$ for $P(t, \omega)\geq 0$ and $Q(t, \omega)\geq 0$. 
\end{theorem}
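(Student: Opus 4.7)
The plan is to exploit the fact that the generalized Kullback--Leibler divergence integrates pointwise in $(t,\omega)$ and contains no coupling constraint across different TF points, so the optimization decouples and can be solved by a simple scalar calculation at each point. Write $C_i(t,\omega)=|C_{f}^{h_{\sigma_i,\beta_i}}(t,\omega)|$ for brevity, and let
\begin{equation*}
F(P) = \sum_{i=1}^{m}\int\!\!\int\bigl(P(t,\omega)\ln\tfrac{P(t,\omega)}{C_i(t,\omega)}-P(t,\omega)+C_i(t,\omega)\bigr)\,dt\,d\omega.
\end{equation*}
The integrand is a sum of pointwise scalar functions of $P(t,\omega)$, so the functional minimization reduces to minimizing, at each $(t,\omega)$, the scalar function
\begin{equation*}
\varphi(p)=\sum_{i=1}^{m}\bigl(p\ln\tfrac{p}{C_i}-p+C_i\bigr), \qquad p\ge 0.
\end{equation*}

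First, I would establish convexity: $\varphi''(p)=m/p>0$ for $p>0$, so $\varphi$ is strictly convex on $(0,\infty)$, and the map $p\mapsto p\ln(p/C_i)-p+C_i$ extends continuously to $p=0$ with value $C_i\ge0$. Hence any interior critical point is the unique global minimizer. Next I would compute
\begin{equation*}
\varphi'(p)=\sum_{i=1}^{m}\ln\tfrac{p}{C_i}=m\ln p-\sum_{i=1}^{m}\ln C_i,
\end{equation*}
and set $\varphi'(p)=0$ to obtain
\begin{equation*}
p^\star=\Bigl(\prod_{i=1}^{m}C_i\Bigr)^{1/m}.
\end{equation*}
This is precisely $|MrC_f(t,\omega)|$, since $\bigl|\prod_{i=1}^{m}C_f^{h_{\sigma_i,\beta_i}}(t,\omega)\bigr|^{1/m}=\prod_{i=1}^{m}|C_f^{h_{\sigma_i,\beta_i}}(t,\omega)|^{1/m}$.

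A short concluding remark would handle the degenerate case where some $C_i(t,\omega)=0$: there $p^\star=0$, and both sides of the KL expression are consistent under the usual convention $0\ln 0=0$, so the formula still identifies the minimizer. The main subtlety I foresee is not the calculus itself but the domain issue and boundary behavior at $p=0$, together with justifying the pointwise-to-functional reduction; convexity of $\varphi$ on $[0,\infty)$ plus the absence of any integral constraint (such as a normalization $\int\!\!\int P=1$) is what makes the pointwise minimizer simultaneously the global minimizer of $F$.
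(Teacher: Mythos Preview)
Your argument is correct and follows essentially the same approach as the paper's proof: set the first-order optimality condition $\sum_{i=1}^m \ln\frac{P}{C_i}=0$ and solve to obtain the geometric mean. You simply add the convexity and boundary remarks that the paper omits, so there is no substantive difference.
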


\begin{proof}
The derivative with respect to $P(t, \omega)$  of optimization problem  (26) must equal zero at the solution point:  
\begin{align} 
	\sum_{i=1}^{m}\ln \frac{P(t,\omega)}{|C_{f}^{h_{\sigma_i,\beta_i}} (t,\omega) |}=0.
\end{align}        
Solving this equation for $P(t, \omega)$  yields the optimal  solution
$$P(t, \omega)=\left( \prod_{i=1}^m|C_{f}^{h_{\sigma_i,\beta_i}}(t, \omega)|\right) ^{\frac{1}{m}}=|MrC_{f}(t, \omega)|.$$ 
\end{proof} 

It is remarkable here that Theorem 4  provides a more profound interpretation  for  the   MrCT: it can be seen as  the ``optimal'' way to combine individual CTs.  There are, of course,   other  measures, such as $L_2$-norm and maximum cross-entropy [51],  used to combine  information from multiple CTs, which  are worthy of continued investigation.   

\subsection{Parameters selection}
Determine  the values of  parameters $(\sigma_i, \beta_i)$ ($i=1,2, \cdots$, $m$)  is crucial for MrCT. We discuss how to select  effective  values in this section.  

It is well-known that 
the optimal  values of  parameters $\sigma $ and $\beta $ are signal-dependent  such that the window can adapt to the signal being analyzed. Several estimation methods  of  signal  features (e.g., CR)  are  available in the literature [35,46,54,56,58], while which  generally focus on its continuously varying  characteristics.  
We propose that the MrCT is  computed  based on several discrete values $(\sigma_i, \beta_i)$  ($i=1, 2, \dots, m$), over a range of values that includes its maximum and minimum acceptable values.  
In order to get robust estimates that  contain the main information of the signal, we utilize the  chirp-Fourier transform (CFT) [70] to detect  the signal  CR  information, which is defined as
\begin{equation}\label{key1}
	CFT_{f}(\omega,\beta)=\int_{-\infty}^{+\infty}f(\mu)\mathrm{e}^{-j\frac{\beta}{2}\mu^2}\mathrm{e}^{-j\omega\mu}\mathrm{d}\mu.	
\end{equation}
When the chirp rate and the harmonic frequency is  matched to the signal,  the magnitude of CFT  appears a peak [70].    Therefore we  can obtain $m$  CR  estimations, which  basically correspond  to the varying tendency  of  the  frequency  of the signal,  by  detecting  the first $m$  peaks of $|CFT_{f}(\omega,\beta)|$,  denoted as  $(\omega^*_i, \beta^*_i)$ ($i=1,2,\cdots,m$).   Parameters $\beta_i$ ($i=1,2,\cdots,m$) of MrCT   are   then  taken  as
\begin{align} 
	\beta_i= \beta^*_i,   ~~i=1,2,\cdots,m. 
\end{align} 

The CR  estimations, in turn, can be utilized in selecting the appropriate $\sigma$. Cohen in [10] presented  the relationship between an optimal window width and the CR of a signal. For the Gaussian window case, a width is described as a standard deviation and can be defined as follows:
\begin{align} 
	\sigma(t)=\frac{1}{\sqrt{2\pi |\phi''(t)|}},
\end{align} 
where $\phi''(t)\neq 0$. Based on this fact, parameters $\sigma_i$ ($i=1,2,\cdots,m$) of MrCT are determined by 
\begin{align} 
	\sigma_i= \frac{C_\sigma}{\sqrt{2\pi |\beta^*_i|}}, ~~i=1,2,\cdots,m, 
\end{align} 
where $C_\sigma>0$ is used to adjust   the $\sigma$ in a more reasonable interval. Hence,  the values of  parameters $(\sigma_i, \beta_i)$ ($i=1,2,\cdots,m$)  in MrCT is determined  by 
\begin{align} 
	(\sigma_i, \beta_i)= (\frac{C_\sigma}{\sqrt{2\pi |\beta^*_i|}}, \beta^*_i), ~~i=1,2,\cdots,m.
\end{align} 

Alternatively,  the parameters $\sigma_i$ ($i=1,2,\cdots,m$) also  can be chosen multiplicatively  or additively, i.e., $\sigma_i=i \sigma_1$ or $\sigma_i= \sigma_1+i \Delta \sigma $,  as presented in [5], and further  matched with the detected  $\beta_i$ ($i=1,2,\cdots,m$)  depending on the relation (30), even though the equality does not hold at this time.  

Number  $m$ is generally  determined based on prior information or estimation from a coarse TF representation. This number is   tolerant and usually  much smaller than the length of the  analyzed signal,  which  is  one important reason  to  reduce the  computational complexity of the adaptive CTs.

\subsection{Computational complexity}
The MrCT is to compute the geometric  mean  of $m$ CTs,   so its computational complexity mainly comes from the CT calculations with different windows.  To be specific, we assume that a signal of  $N$ samples is utilized, then the  CTs with different $(\sigma_i, \beta_i)$ ($i=1,2,\cdots,m$) can be implemented by FFT and they require $O(mN^2\log_2N)$ operations. Therefore, the total computing complexity of MrCT is $O(mN^2\log_2N)$.

\section{Multi-resolution synchroextracting chirplet transform }

In this section,  we will develop a novel analysis  approach to enhance the TF readability of the proposed  MrCT.  

For the convenience of the following explanation, we consider the case of a
chirp signal $f(t) = A_0\mathrm{e}^{j\phi(t)}$, where $A_0>0$ is a constant, $\phi(t)=at+\frac{b}{2}t^2$. 
The CT of  this  chirp signal   under the Gaussian window $g(t)= {(\sqrt{2\pi}\sigma)}^{-\frac{1}{2}}\mathrm{e}^{-\frac{t^2}{2\sigma^2}}$   is given  by [30]
\begin{equation}
	\begin{split}
		&C_{f}^{h_{\sigma, \beta}}(t, \omega)\\&={(\sqrt{2\pi}\sigma)}^{-\frac{1}{2}}\int_{-\infty}^{+\infty}f(\mu)\mathrm{e}^{-\frac{(t-\mu)^2}{2\sigma^2}}\mathrm{e}^{-j\frac{\beta}{2}(\mu- t)^2}\mathrm{e}^{-j\omega(\mu- t)}\mathrm{d}\mu\\
		&=f(t)\sqrt{\frac{\sqrt{2\pi}\sigma}{1+j\sigma^2(\beta-b)}}\exp\left(-\frac{\sigma^2(\omega-\phi'(t))^2}{2(1+j\sigma^2(\beta-b))}\right),
	\end{split}
\end{equation}
where $\phi'(t)=a+bt$.
From (33), we can obtain that
\begin{equation}
	\begin{split}
		\frac{\partial}{\partial t}C_{f}^{h_{\sigma, \beta}}(t, \omega)=C_{f}^{h_{\sigma, \beta}}(t, \omega)\left( j \phi'(t) +\frac{b\sigma^2(\omega-\phi'(t))}{1+j\sigma^2(\beta-b)} \right).
	\end{split}
\end{equation}
Due to that 
\begin{align} 
		&\frac{\partial}{\partial t}C_{f}^{h_{\sigma, \beta}}(t, \omega)\nonumber \\&=\frac{\partial}{\partial t}\left( \frac{1}{\sqrt{\sqrt{2\pi}\sigma}}\int_{-\infty}^{+\infty}f(\mu)\mathrm{e}^{-\frac{(t-\mu)^2}{2\sigma^2}}\mathrm{e}^{-j\frac{\beta}{2}(\mu- t)^2}\mathrm{e}^{-j\omega(\mu- t)}\mathrm{d}\mu\right)\nonumber\\& =(\frac{1}{\sigma^2}+j\beta)C_{f}^{th_{\sigma, \beta}}(t, \omega)+j\omega C_{f}^{h_{\sigma, \beta}}(t, \omega),
\end{align}
where $C_{f}^{th_{\sigma, \beta}}(t, \omega)={(\sqrt{2\pi}\sigma)}^{-\frac{1}{2}}\int_{-\infty}^{+\infty}f(\mu)(\mu-t)\mathrm{e}^{-\frac{(\mu-t)^2}{2\sigma^2}}$ $\mathrm{e}^{-j\frac{\beta}{2}(\mu- t)^2}\mathrm{e}^{-j\omega(\mu- t)}\mathrm{d}\mu$.  Combining (34) and (35),  we can obtain 
\begin{equation}
	\begin{split}
		&(\frac{1}{\sigma^2}+j\beta)C_{f}^{th_{\sigma, \beta}}(t, \omega)\\=&C_{f}^{h_{\sigma, \beta}}(t, \omega)\left( j (\phi'(t)-\omega) +\frac{b\sigma^2(\omega-\phi'(t))}{1+j\sigma^2(\beta-b)} \right).
	\end{split}
\end{equation}
For different combinations $(\sigma_i, \beta_i)$ ($i=1,2,\cdots,m$),  we can derive  the following expression from  equation (36) as
\begin{equation}
	\begin{split}
		&\frac{	(\prod_{i=1}^m C_{f}^{th_{\sigma_i, \beta_i}}(t, \omega))^{\frac{1}{m}}}{	MrC_{f}(t, \omega)}\\=&\left( \prod_{i=1}^m(\frac{1}{\sigma_i^2}+j\beta_i)^{-1}\left( j (\phi'(t)-\omega) +\frac{b\sigma_i^2(\omega-\phi'(t))}{1+j\sigma_i^2(\beta_i-b)} \right)\right)^{\frac{1}{m}},
	\end{split}
\end{equation}
where  $| MrC_{f}(t, \omega)| >\gamma$,   the parameter $\gamma>0$ is a hard threshold on $| MrC_{f}(t, \omega)|$  to  overcome the shortcoming  that  $| MrC_{f}(t, \omega)|\approx0$. Obviously,  the set of points $\omega=\phi'(t)$  satisfies
\begin{equation}
	\begin{split}
		MrIF(t, \omega):=	\left| \frac{	(\prod_{i=1}^m C_{f}^{th_{\sigma_i, \beta_i}}(t, \omega))^{\frac{1}{m}}}{	MrC_{f}(t, \omega)}\right| =0.
	\end{split}
\end{equation}
We call equation  (38)  the combined  IF   equation  because it can be seen as a combination of multiple IF equations with respect to  different parameters.  The $MrIF(t, \omega)$  is  derived   from  multi-resolution chirplet representation,  which  provides a more accurate IF estimation due to  the high-resolution of original transform.  More importantly,  the    transient signal satisfies the  combined  IF   equation, as stated  in the following theorem.  

\begin{theorem}
	For impulse signal $f(t)=A_0\delta(t-t_0)$, its group delay $t=t_0$ satisfies
	\begin{align} 
		MrIF(t_0, \omega)=0. 
	\end{align} 
\end{theorem}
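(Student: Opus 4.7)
The plan is to show that each factor $C_{f}^{th_{\sigma_i,\beta_i}}(t_0,\omega)$ in the numerator of $MrIF$ vanishes identically in $\omega$, while the denominator $MrC_f(t_0,\omega)$ is a nonzero constant, so the ratio is $0$. The key observation is the classical distributional identity $(\mu-t_0)\,\delta(\mu-t_0)=0$, which kills the whole integrand defining $C_{f}^{th_{\sigma_i,\beta_i}}(t_0,\omega)$.

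First I would substitute $f(\mu)=A\delta(\mu-t_0)$ and $t=t_0$ directly into the definition
\[
C_{f}^{th_{\sigma_i,\beta_i}}(t_0,\omega)=(\sqrt{2\pi}\sigma_i)^{-1/2}\!\!\int_{-\infty}^{+\infty}\!\!A\,\delta(\mu-t_0)\,(\mu-t_0)\,\mathrm{e}^{-\frac{(\mu-t_0)^2}{2\sigma_i^2}}\mathrm{e}^{-j\frac{\beta_i}{2}(\mu-t_0)^2}\mathrm{e}^{-j\omega(\mu-t_0)}\mathrm{d}\mu,
\]
and invoke the sifting property of $\delta$: the factor $(\mu-t_0)$ evaluated at the support point $\mu=t_0$ gives $0$, so $C_{f}^{th_{\sigma_i,\beta_i}}(t_0,\omega)=0$ for every $i$ and every $\omega$. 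Consequently the geometric mean $\bigl(\prod_{i=1}^{m}C_{f}^{th_{\sigma_i,\beta_i}}(t_0,\omega)\bigr)^{1/m}$ in the numerator of $MrIF(t_0,\omega)$ vanishes.

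Next I would verify the denominator is well-defined and nonzero so the ratio is genuinely $0$ and not $0/0$. A direct substitution in the CT definition (2) with $t=t_0$ gives
\[
C_{f}^{h_{\sigma_i,\beta_i}}(t_0,\omega)=(\sqrt{2\pi}\sigma_i)^{-1/2}\,A,
\]
independent of $\omega$ and strictly positive, so $MrC_f(t_0,\omega)=\bigl(\prod_{i=1}^{m}(\sqrt{2\pi}\sigma_i)^{-1/2}A\bigr)^{1/m}>0$; choosing $\gamma$ smaller than this value places $(t_0,\omega)$ in the domain where $MrIF$ is defined. Assembling the two observations yields $MrIF(t_0,\omega)=0$, as claimed.

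The only conceptual subtlety is rigorously handling the distribution $\delta$, since $f(t)=A\delta(t-t_0)$ is not a classical $L^2$ function and the identity $(\mu-t_0)\delta(\mu-t_0)=0$ must be read in the sense of tempered distributions; in practice one may justify it by viewing $\delta$ as a limit of narrow Gaussians and passing to the limit inside the integral, which is legitimate because the window $g$ is Schwartz. This is the ``main obstacle,'' but it is really just a formalism step — the computational content is simply that the odd factor $(\mu-t_0)$ forces the spike at $\mu=t_0$ to contribute zero.
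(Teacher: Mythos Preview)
Your argument is correct and reaches the same conclusion as the paper, namely that each $C_{f}^{th_{\sigma_i,\beta_i}}(t_0,\omega)$ vanishes while $MrC_f(t_0,\omega)$ does not. The route differs slightly: the paper first writes down the closed-form CT of the delta signal (your denominator computation, but for general $t$), differentiates it in $t$, and then invokes the algebraic identity (50), $\partial_t C_{f}^{h_{\sigma,\beta}} = (\sigma^{-2}+j\beta)\,C_{f}^{th_{\sigma,\beta}} + j\omega\,C_{f}^{h_{\sigma,\beta}}$, to deduce $C_{f}^{th_{\sigma,\beta}}(t,\omega)=(t_0-t)\,C_{f}^{h_{\sigma,\beta}}(t,\omega)$ for \emph{all} $t$, which vanishes at $t=t_0$. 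Your direct appeal to the sifting property (equivalently $(\mu-t_0)\delta(\mu-t_0)=0$) is more elementary and shorter; the paper's detour through (50) has the minor advantage of staying within the operator framework used to derive the combined IF equation and of giving the relation for every $t$, not only $t_0$. Either way the substance is identical, and your remark about interpreting $\delta$ distributionally is a welcome bit of rigor that the paper leaves implicit.
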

\begin{proof}
	See the  Supplementary Materials-\uppercase\expandafter{\romannumeral3}.
\end{proof}
%

Theorem 5 shows a very interesting result that the combined IF equation not only can  cover the  instantaneous characteristic of chirp-like  signals  but also can  cover the instantaneous information of impulse signals,  which  unifies  the well-known  IF and  group delay detectors  proposed  in  the SST [21,26] and the time-reassigned SST [31,32].

Based on  the combined   IF equation (38), a novel TF representation called the multi-resolution synchroextracting chirplet transform (MrSECT) is proposed, which is
\begin{gather}
	MrSEC(t,\omega)=	MrC_{f}(t, \omega) \delta(MrIF(t, \omega)).
\end{gather}
Considering  the  calculation  error, it is suggested that (40) be expressed  as
\begin{equation}
	\begin{split}
	&MrSEC(t,\omega)\\=& \begin{cases}
		MrC_{f}(t, \omega),   & \text{$ \left| \frac{	(\prod_{i=1}^m C_{f}^{th_{\sigma_i, \beta_i}}(t, \omega))^{\frac{1}{m}}}{	MrC_{f}(t, \omega)}\right|<\frac{\Delta\omega}{2}$},\\
		0,  & \text{otherwise,}
	\end{cases}
	\end{split}
\end{equation}
where $\Delta\omega$ is the discrete frequency interval.

\section{Numerical validation}

To illustrate the proposed methods, we employ several examples involving  multi-component chirp signals, cosine modulated signals  and impulsive signals which  with fast-varying frequency characteristics.  For comparison,  the STFT with different window width, the adaptive CTs (i.e., double-adaptive CT (DACT) [54] and GLCT [56]),  and the  combined multi-resolution spectrogram (MrSTFT for short) [50] have also been considered.  In addition, the state-of-the-art highly concentrated schemes, namely the  SST [21], RM [18], time-reassigned SST (TSST for short) [31],  second-order SST (SSST for short) [26], and SECT [30],  have also been used.  The window function used in these methods is unified as a Gaussian window, in which the parameter $\sigma $  is  adjusted manually with the aim of providing the best achievable representations according to the visual inspection of the resulting TF plots.

\subsection{Example 1}
We first  use the proposed  methods  to detect close chirp components, which is given by 
\begin{equation}
	\begin{split}
		&f(t)=f_1(t)+f_2(t)+f_3(t)+f_4(t)+n(t), \\
		&f_1(t)=
		\begin{cases}
			\sin(2\pi\times(20t+40t^2)),   &  ~ t\in [0, 1],\\
			0,  & ~ t\in (1, 2], 
		\end{cases}\\
		&f_2(t)=
		\begin{cases}
			\sin(2\pi\times(34t+40t^2)),   &  ~ t\in [0.1, 0.78],\\
			0,  & ~ t\in [0, 0.1) \cup (0.78, 2], 
		\end{cases}\\
		&f_3(t)=FT(0.02\exp(2\pi j(430t-5t^2))), ~  t\in [0, 2],\\
		&f_4(t)=FT(0.02\exp(2\pi j(445t-5t^2))), ~  t\in [0, 2], 
	\end{split}
\end{equation}
where $n(t)$ is the Gaussian  noise  with  the $\text{SNR}=10$ dB, and $FT$ denotes the Fourier transform.   The sampling frequency is  256  Hz.

To illustrate the resolution of MrCT, we first compare it  to the STFT with different window width (i.e., $\sigma=0.1, 0.16, 0.25$) to analyze signals $f_1(t)$ and $f_4(t)$. The TF results are displayed in Fig. 6.  From the results, we can observe that the  STFT using  a smaller value of $\sigma$ (i.e., $\sigma=0.1$)  can  achieve  a highly concentrated  characterization for  $f_4(t)$, but spreads the energy distribution of $f_1(t)$. In contrast, a relatively  large $\sigma$ brings  the opposite   results for STFT when used to address  these  two signals, as presented in Fig. 6(c). Finding  the optimal value of $\sigma$ is important for STFT, particularly when dealing with  a mixture of different types of signals. Unfortunately, it is  not easy  to determine the best one in  practice. In this regard, the multi-resolution representations reduce the difficulty of tuning the parameters and can provide a sharped TF energy distribution (see Fig. 6(d)).  To  further show the  energy concentration more clearly,  we list the TF energy curves  by  selecting  two time and frequency frames (see Fig. 7). The results show that:  the frequency peak is better defined by the MrCT than that by the spectrograms computed with  various analysis windows, which implies the effectiveness of  MrCT to  introduce the  CR parameter, besides the window width parameter; on the other hand, as for signal $f_4(t)$, seen in the second figure, the spectrogram with a short window provides the best representation for this signal, and MrCT is better than other two. Overall, the combination provides a better  signal  representation than any of the spectrograms considering both time and frequency resolutions.
\begin{figure}[hbt!]
	\vspace{-0.1cm}
	\setlength{\belowcaptionskip}{-0.1cm}
	\centering
	\begin{minipage}{0.492\linewidth}
		\centerline{\includegraphics[width=1\textwidth]{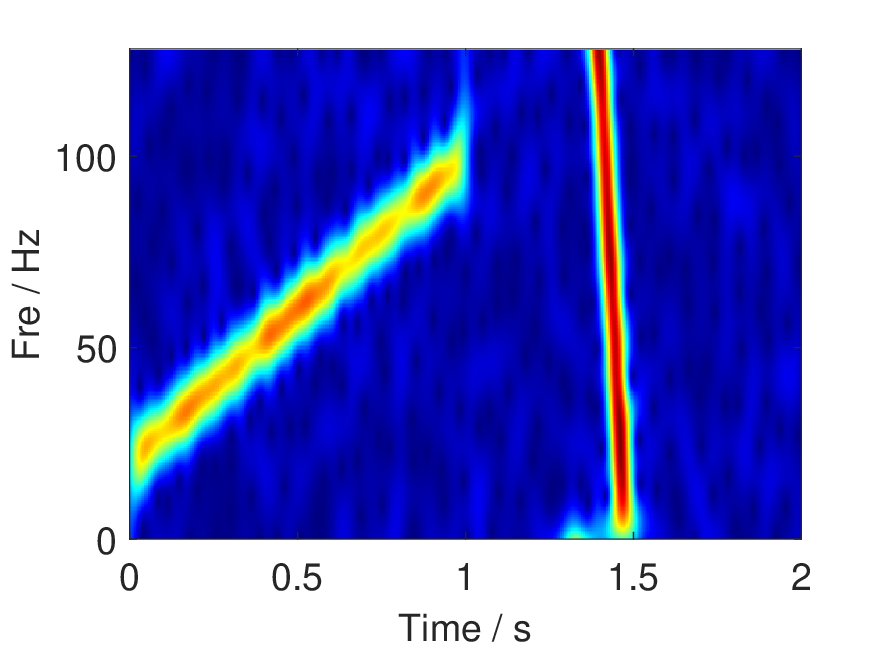}}
		\centerline{(a)}
	\end{minipage}
	\begin{minipage}{0.492\linewidth}
		\centerline{\includegraphics[width=1\textwidth]{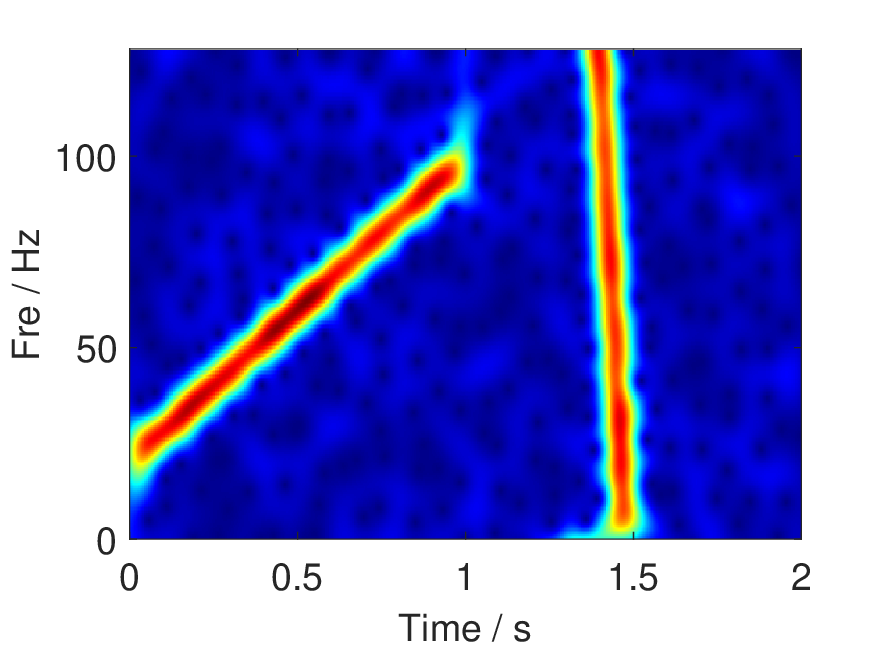}}
		\centerline{(b)}
	\end{minipage}\\ \begin{minipage}{0.492\linewidth}
		\centerline{\includegraphics[width=1\textwidth]{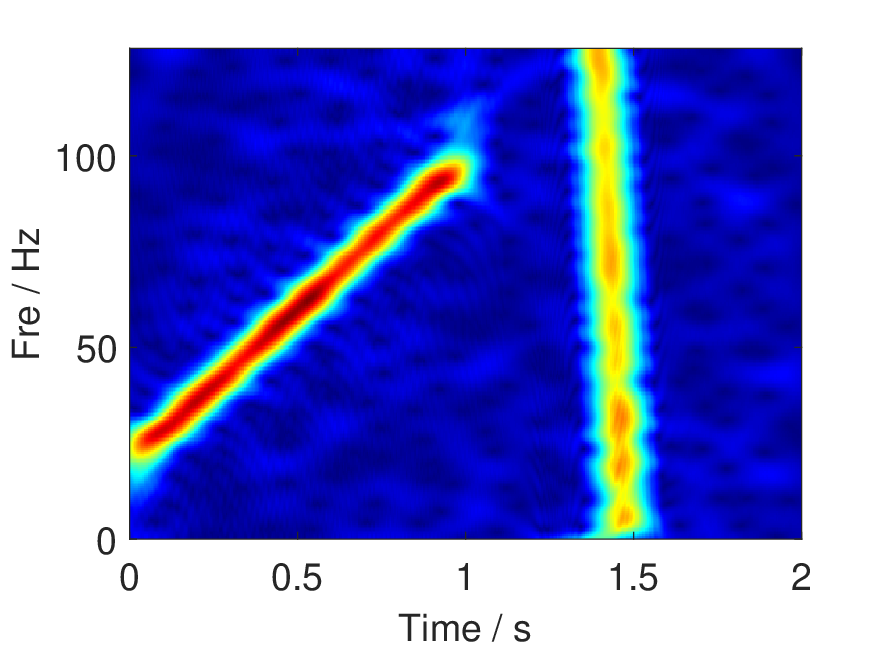}}
		\centerline{(c)}
	\end{minipage}
	\begin{minipage}{0.492\linewidth}
		\centerline{\includegraphics[width=1\textwidth]{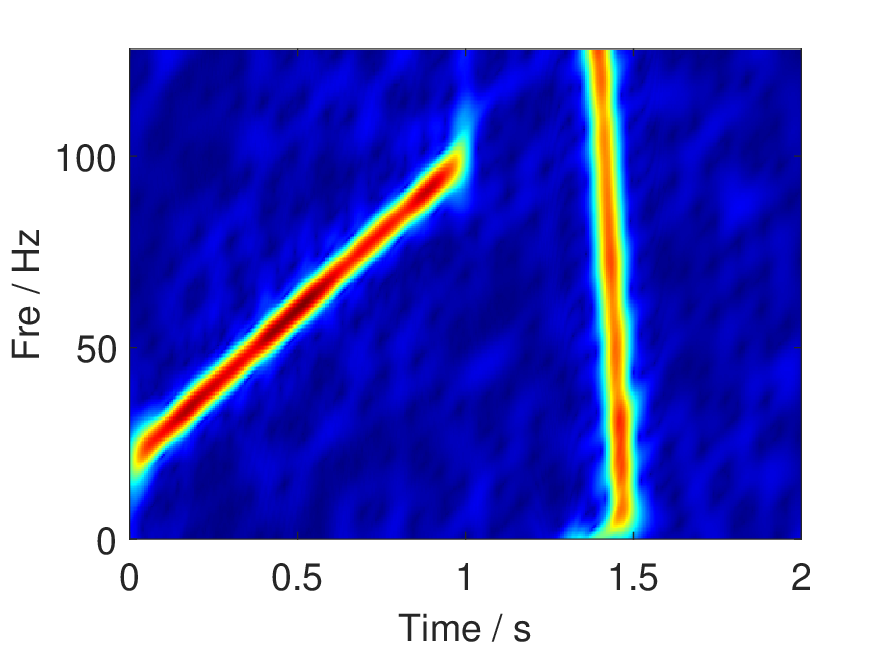}}
		\centerline{(d)}
	\end{minipage}
	\caption{ \small  TF results using STFT and MrCT methods.  (a)  The STFT  with  $\sigma=0.1$,  (b)  the STFT  with  $\sigma=0.16$, (c)  the STFT  with  $\sigma=0.25$, (d) the MrCT ($m=3$, $\sigma=0.1, 0.16, 0.25$).}
\end{figure}
\begin{figure}[htbt!]
	\vspace{-0.1cm}
\setlength{\belowcaptionskip}{-0.1cm}
	\centering
	\begin{minipage}{0.493\linewidth}
		\centerline{\includegraphics[width=1\textwidth]{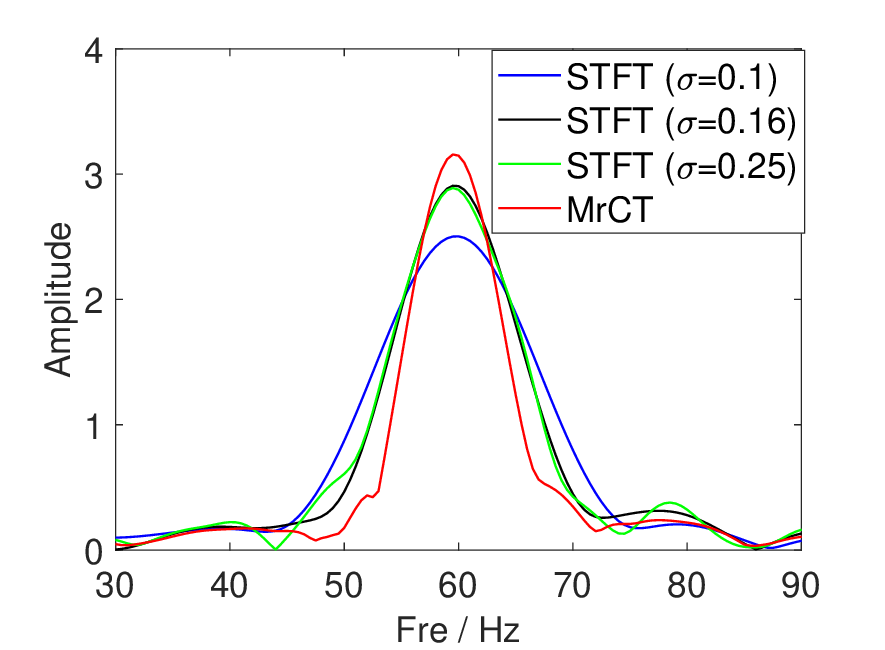}}
		\centerline{(a)}
	\end{minipage}
	\begin{minipage}{0.493\linewidth}
		\centerline{\includegraphics[width=1\textwidth]{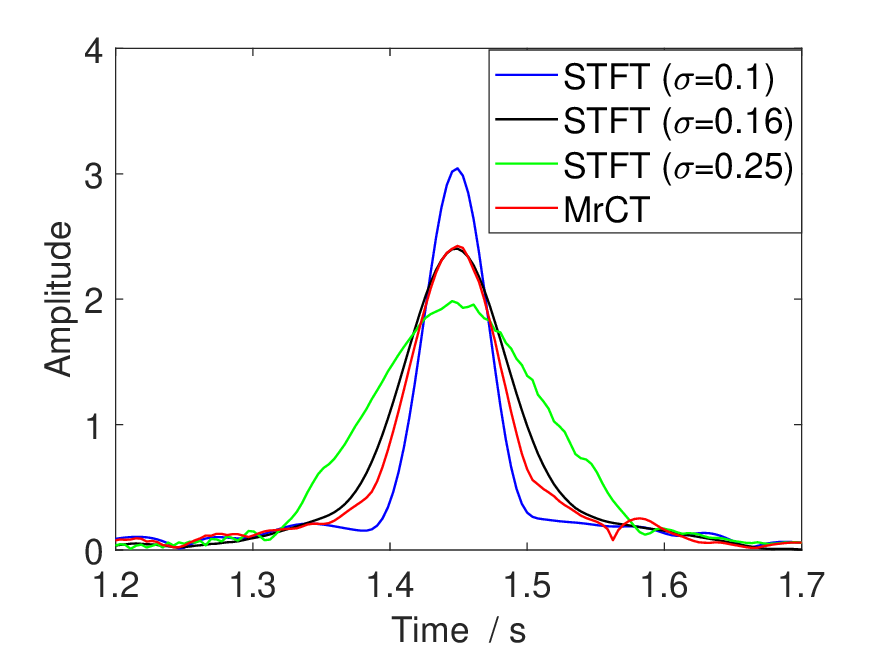}}
		\centerline{(b)}
	\end{minipage}
	\caption{ \small  TF energy  curves by selecting  various frames.  (a) Spectrum of the time frame $t=0.5$ s, (b) time evolution of the frequency bin Fre $=46$  Hz.}
\end{figure}
\begin{figure}[ht!]
	\vspace{-0.1cm}
\setlength{\belowcaptionskip}{-0.2cm}
	\centering
	\begin{minipage}{0.492\linewidth}
		\centerline{\includegraphics[width=1\textwidth]{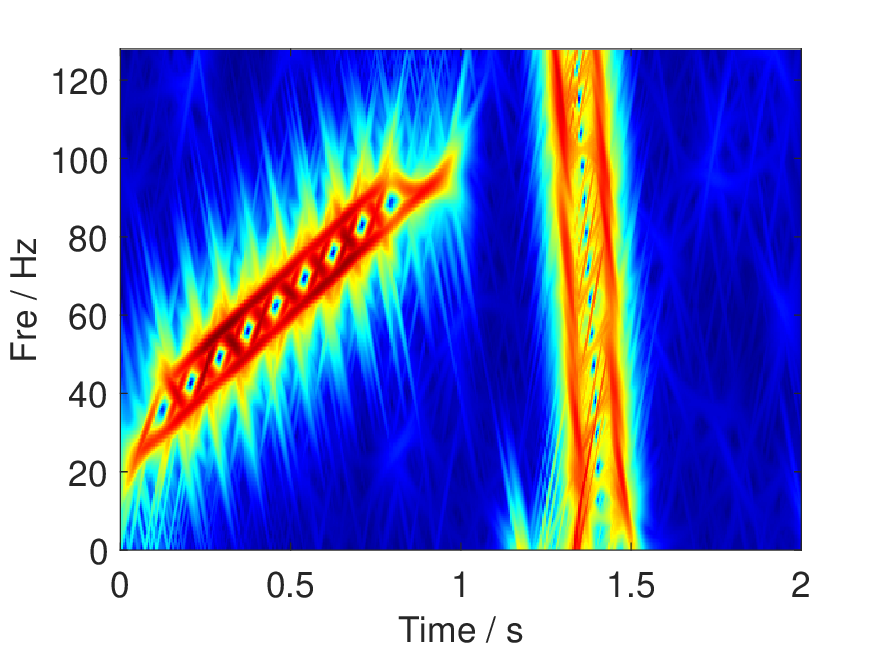}}
		\centerline{(a)}
	\end{minipage}
	\begin{minipage}{0.492\linewidth}
		\centerline{\includegraphics[width=1\textwidth]{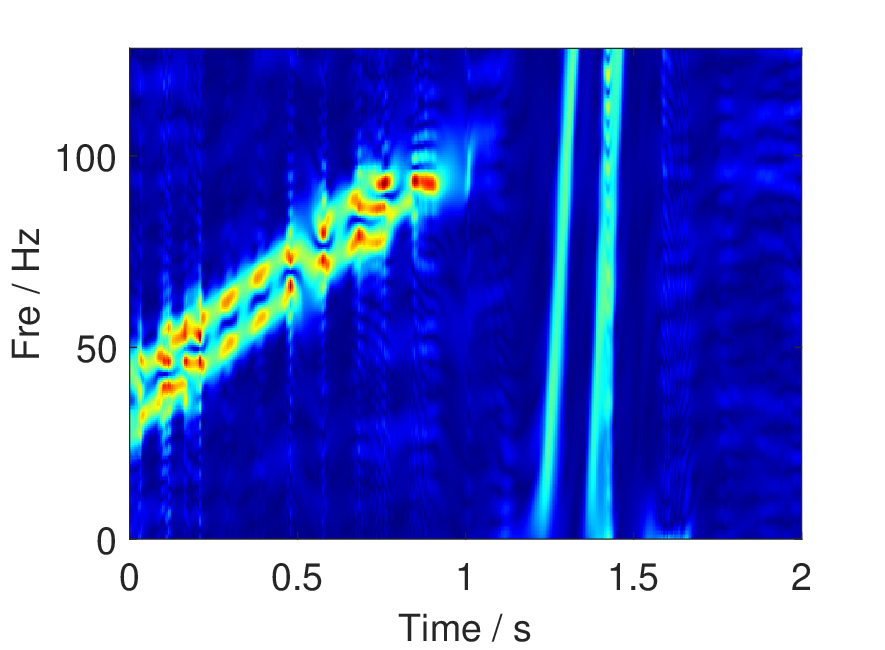}}
		\centerline{(b)}
	\end{minipage}\\ \begin{minipage}{0.492\linewidth}
		\centerline{\includegraphics[width=1\textwidth]{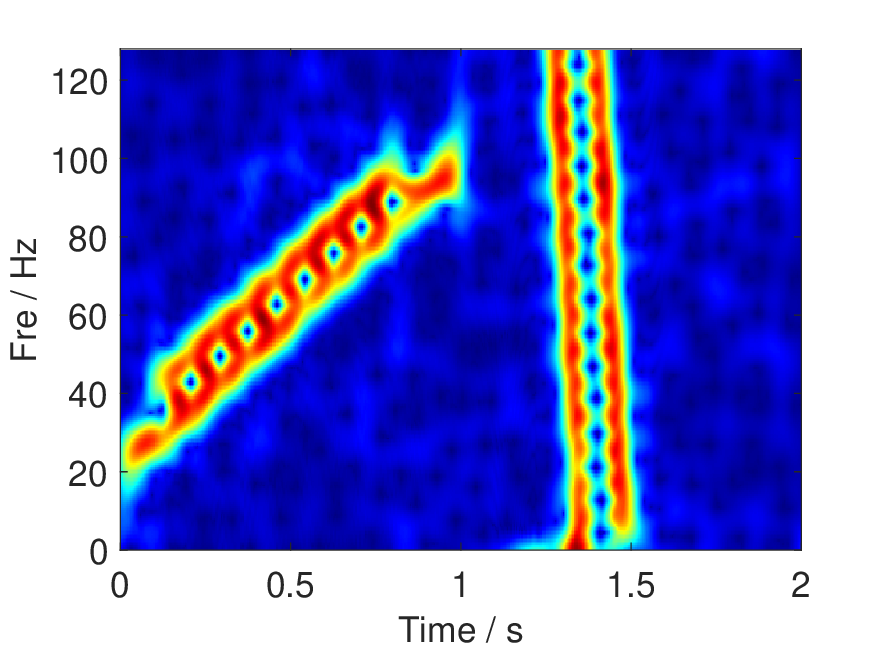}}
		\centerline{(c)}
	\end{minipage}
	\begin{minipage}{0.492\linewidth}
		\centerline{\includegraphics[width=1\textwidth]{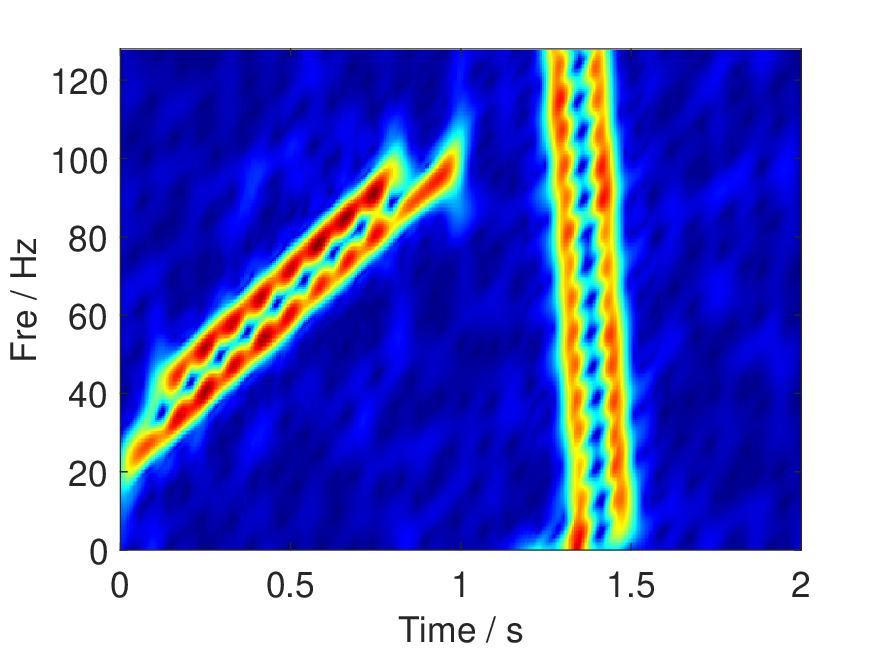}}
		\centerline{(d)}
	\end{minipage}
	\caption{ \small  TF results of signal (42) using various analysis  methods.  (a) GLCT, (b) DACT, (c) MrSTFT, (d)  MrCT.}
\end{figure}

\begin{figure}[hbt!]
	\vspace{-0.2cm}
	\setlength{\belowcaptionskip}{-0.2cm}
	\centering
	\begin{minipage}{0.48\linewidth}
		\centerline{\includegraphics[width=1\textwidth]{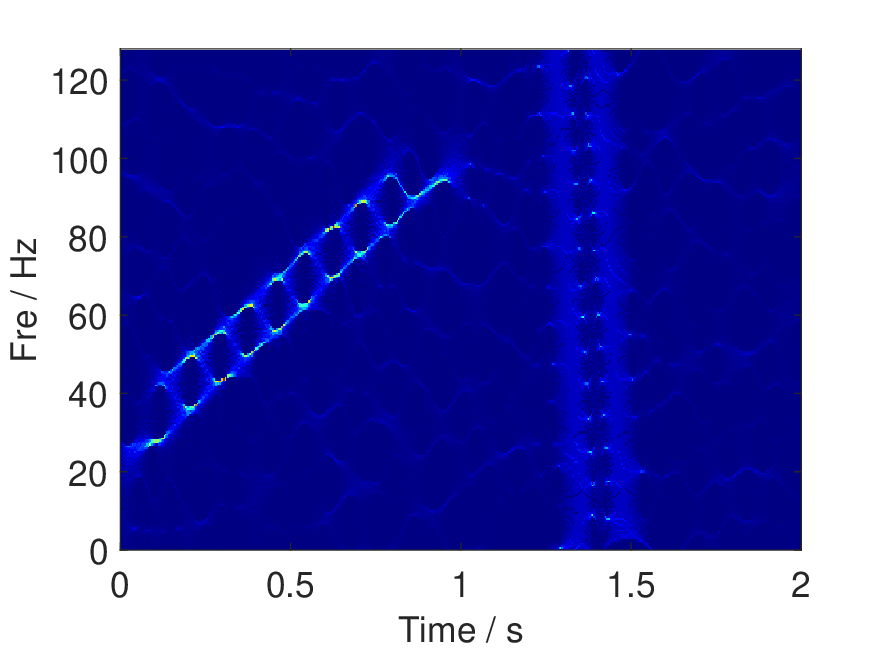}}
		\centerline{(a)}
	\end{minipage}
	\begin{minipage}{0.48\linewidth}
		\centerline{\includegraphics[width=1\textwidth]{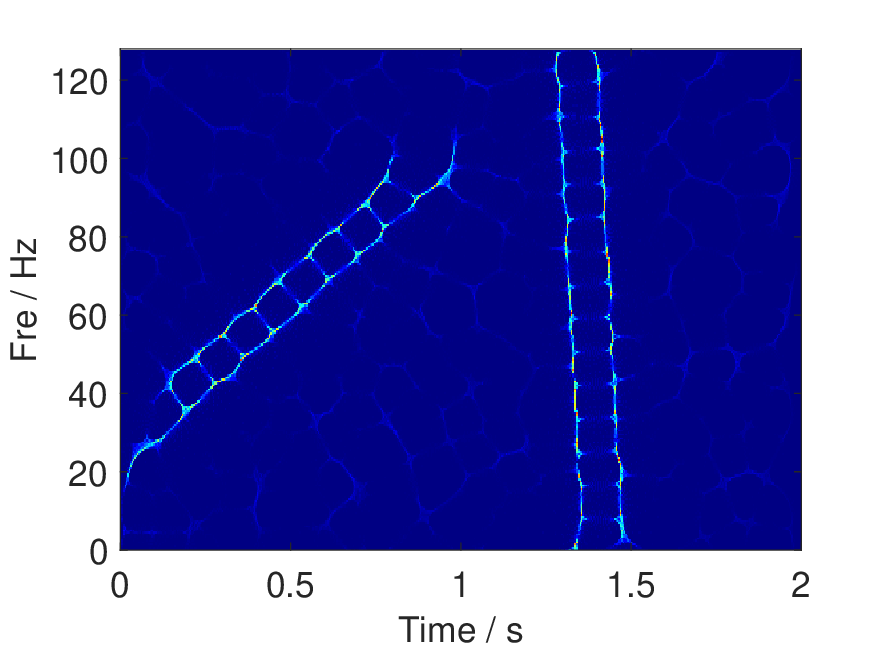}}
		\centerline{(b)}
	\end{minipage} \begin{minipage}{0.48\linewidth}
		\centerline{\includegraphics[width=1\textwidth]{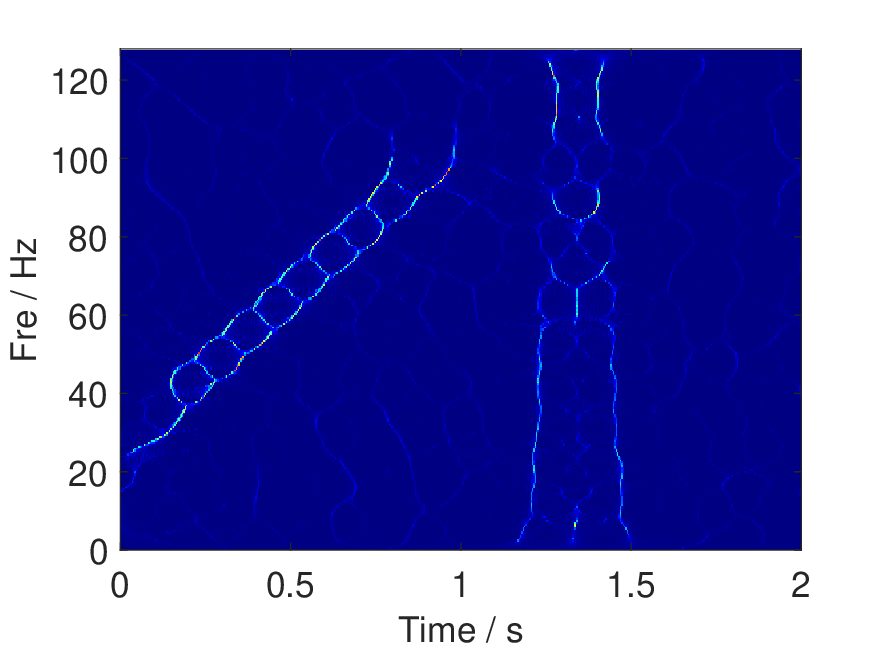}}
		\centerline{(c)}
	\end{minipage}
	\begin{minipage}{0.48\linewidth}
		\centerline{\includegraphics[width=1\textwidth]{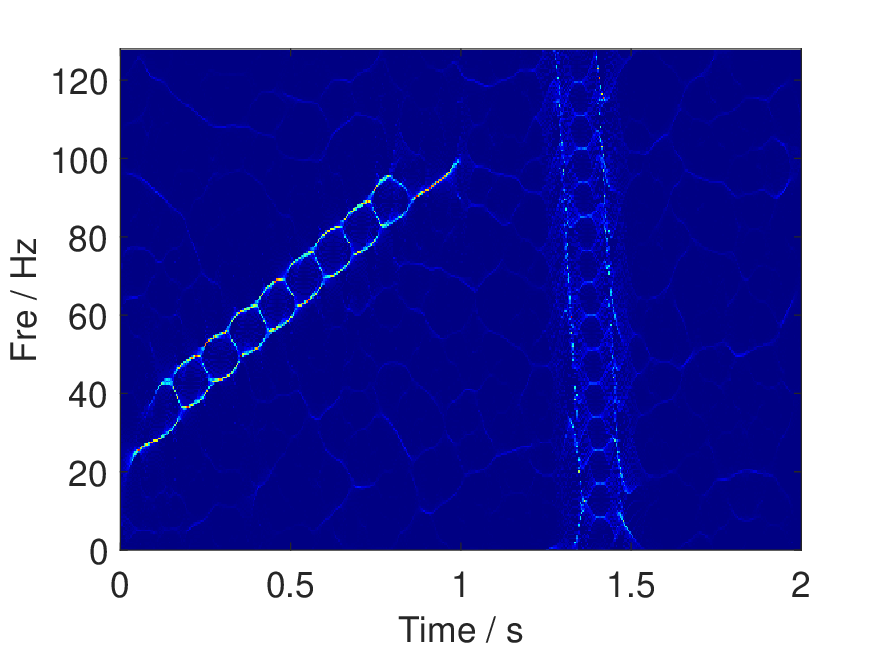}}
		\centerline{(d)}
	\end{minipage}
	\begin{minipage}{0.48\linewidth}
		\centerline{\includegraphics[width=1\textwidth]{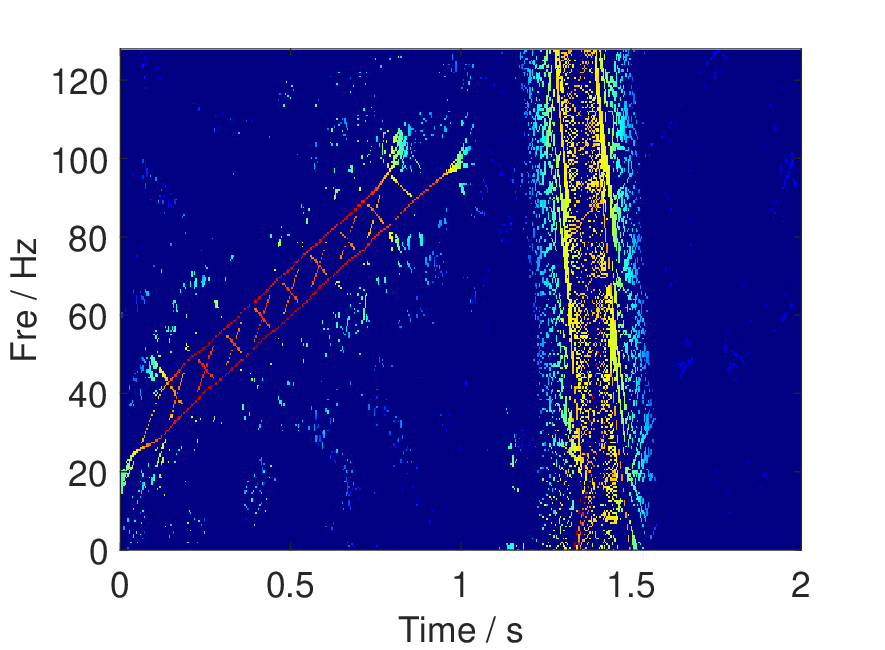}}
		\centerline{(e)}
	\end{minipage}
	\begin{minipage}{0.48\linewidth}
		\centerline{\includegraphics[width=1\textwidth]{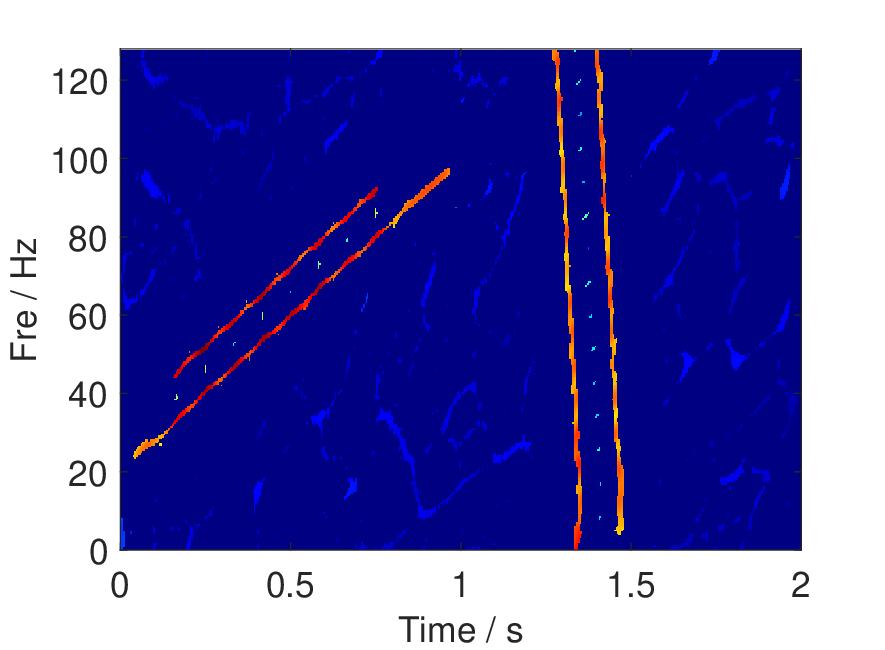}}
		\centerline{(f)}
	\end{minipage}
	\caption{ \small  TF results by various post-processing methods.  (a)  TF plot by SST, (b) TF plot by RM,  (c)  TF plot by TSST, (d) TF plot by SSST,   (e)  TF plot by SECT, (f) TF plot by  MrSECT.}
\end{figure}

Fig. 8  presents the TF representations by two adaptive CTs (i.e., GLCT and DACT) and two combined methods (i.e., MrSTFT and  MrCT, here these two combinations use the same values of $\sigma$).   From  the results, we can see that the MrCT yields a high-concentration  TF result,  and the close components are clearly separated. While the other three methods lead to the TF plots being difficult to interpret.  In addition,   the computational times  required for the  GLCT, DACT,  MrSTFT, and MrCT   in addressing  signal (42) are  0.312  s,   0.031 s,  0.027 s, and  0.041 s, respectively (the tested computer configuration: Intel Core i5-6600 3.30 GHz, 16.0 GB of RAM,  and MATLAB version R2020b). The last three methods need  much less calculation comparing  with  the  GLCT in which the window function is adjusted at every TF location.  

Fig. 9 displays the TF representations obtained by  six TF post-processing methods, including the SST, RM, TSST, SSST, SECT, and MrSECT. It can be seen from the results that,  there  are  heavy  cross-terms between the first two  modes in the TF plane  for the first five  methods.  The MrSECT is clearly superior to others, yielding high concentration of both close components. Compared to the TF result of MrCT (see Fig. 8(d)),  MrSECT effectively improves  its TF readability.

\subsection{Example 2}
Next, let us consider a four-component AM-FM signal with two cosine modulated modes  and  two impulse components   as
\begin{equation}
	\begin{split}
		&f(t)=f_1(t)+f_2(t)+f_3(t)+f_4(t)+n(t), \\
		&f_1(t)=(1+0.05\cos(20\pi t))\cos(2\pi(9\sin(2\pi t)+80t)),\\
		&f_2(t)=(1+0.1\cos(20\pi t))\cos(2\pi(9\sin(2\pi t)+115t)), \\
		&f_3(t)=5\exp(-10000\pi(t-0.46)^2)\cos(340\pi t),\\
		&f_4(t)=5\exp(-10000\pi(t-0.52)^2)\cos(340\pi t), 
	\end{split}
\end{equation}
where  $n(t)$ is the Gaussian noise with the SNR$=8$ dB.   The sampling frequency is 512 Hz, and  time duration is [0 1s].

Fig. 10  provides the TF plots of signal (43)  obtained by the GLCT, DACT,   MrSTFT,  and  MrCT ($m=6$).  It is seen that serious cross-terms appear between  the close components in the TF representations of GLCT and DACT,   and  these two methods  are  unable to fully segregate time and frequency.  By contrast,   MrSTFT and MrCT provide a faithful local representation, with high resolution in both time and frequency.   Compared to MrSTFT, the MrCT has more concentrated
TF energy at the modulation part (see TF plot  in the time interval [0.1, 0.3], [0.7, 0.9]), showing good performance in  handling  cosine modulation signals. 
\begin{figure}[ht!]
	\vspace{-0.2cm}
	\setlength{\belowcaptionskip}{-0.1cm}
	\centering
	\begin{minipage}{0.492\linewidth}
		\centerline{\includegraphics[width=1\textwidth]{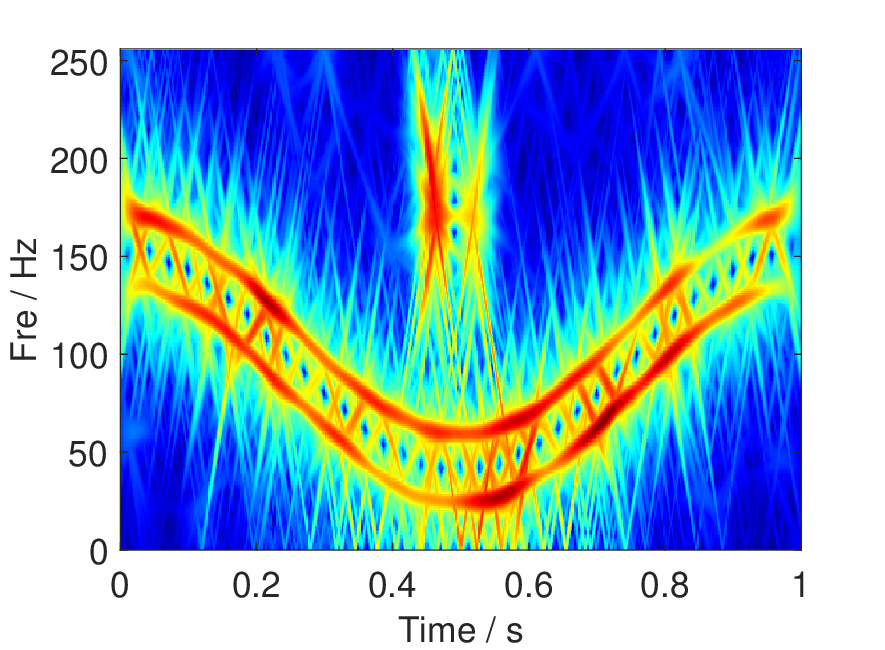}}
		\centerline{(a)}
	\end{minipage}
	\begin{minipage}{0.492\linewidth}
		\centerline{\includegraphics[width=1\textwidth]{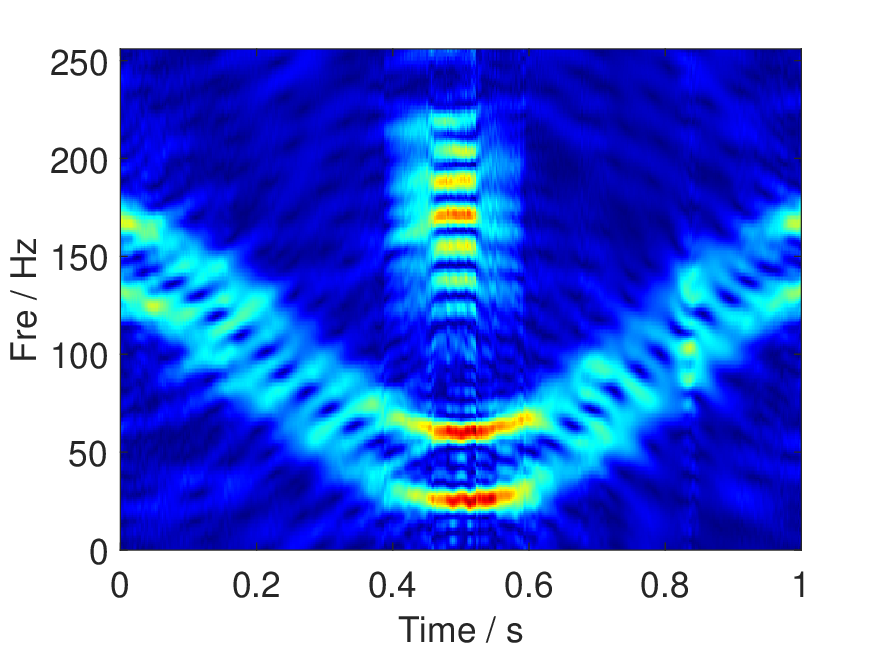}}
		\centerline{(b)}
	\end{minipage}\\ 
\begin{minipage}{0.492\linewidth}
		\centerline{\includegraphics[width=1\textwidth]{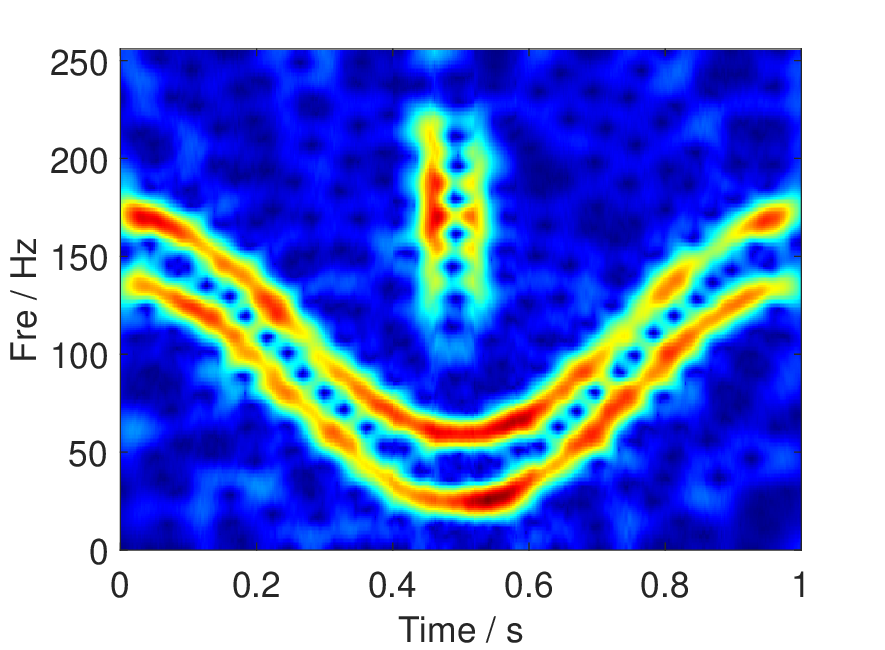}}
		\centerline{(c)}
	\end{minipage}
	\begin{minipage}{0.492\linewidth}
		\centerline{\includegraphics[width=1\textwidth]{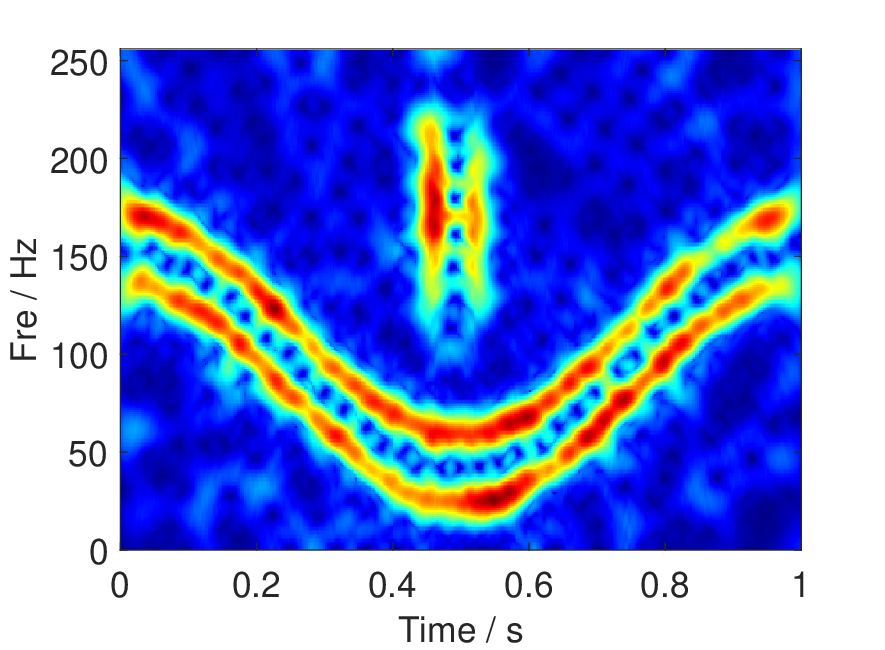}}
		\centerline{(d)}
	\end{minipage}
	\caption{ \small  TF results of signal (43) using various analysis  methods.  (a) GLCT, (b) DACT, (c) MrSTFT, (d)  MrCT.}
\end{figure}

\begin{figure}[ht!]
	\vspace{-0.2cm}
	\setlength{\belowcaptionskip}{-0.1cm}
	\centering
	\begin{minipage}{0.48\linewidth}
		\centerline{\includegraphics[width=1\textwidth]{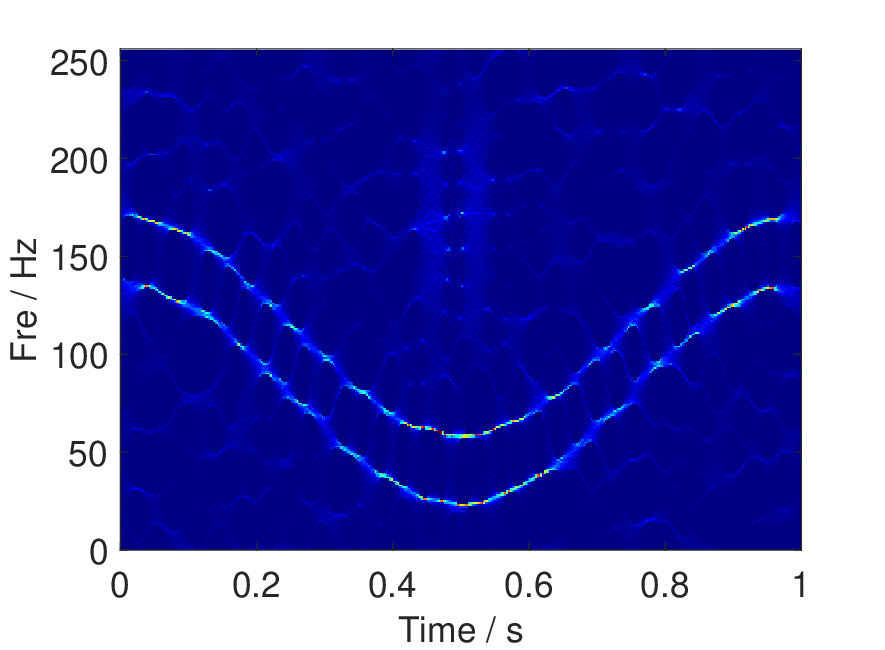}}
		\centerline{(a)}
	\end{minipage}
	\begin{minipage}{0.48\linewidth}
		\centerline{\includegraphics[width=1\textwidth]{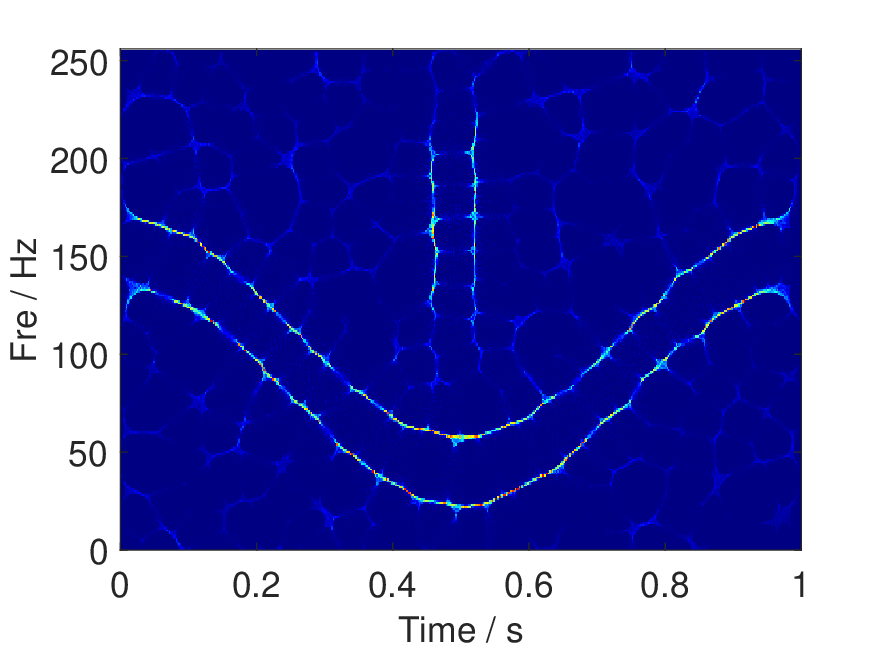}}
		\centerline{(b)}
	\end{minipage} \begin{minipage}{0.48\linewidth}
		\centerline{\includegraphics[width=1\textwidth]{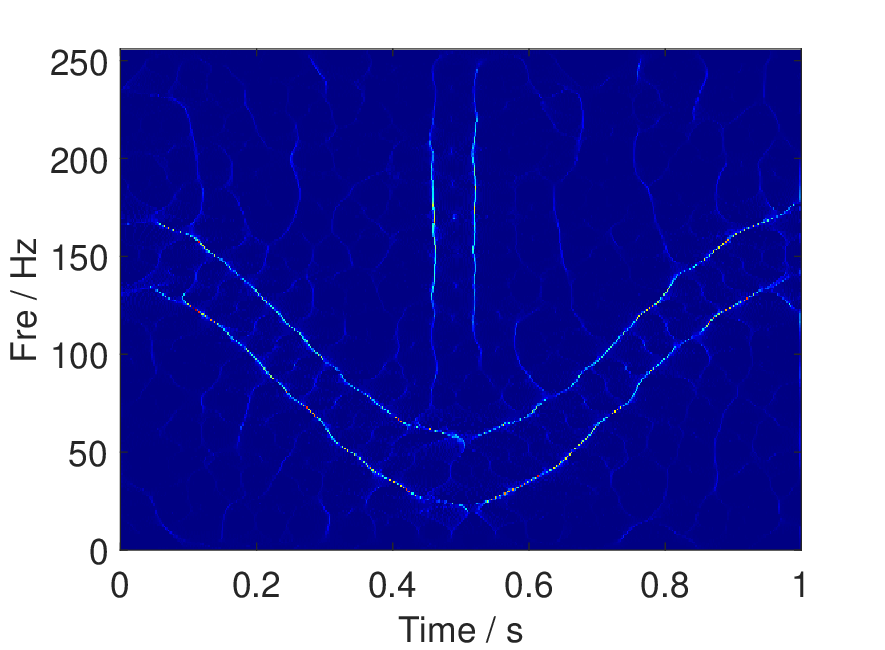}}
		\centerline{(c)}
	\end{minipage}
	\begin{minipage}{0.48\linewidth}
		\centerline{\includegraphics[width=1\textwidth]{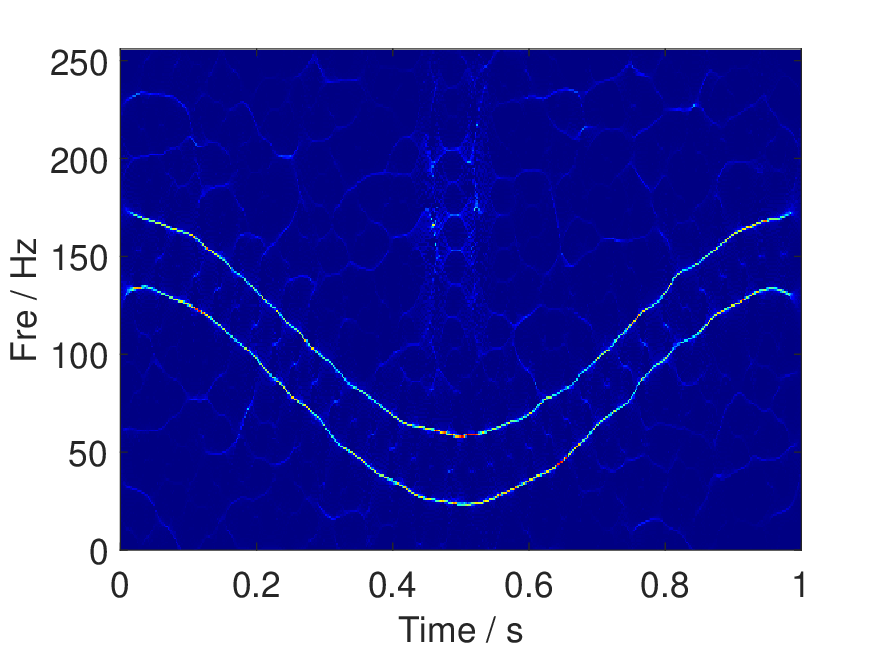}}
		\centerline{(d)}
	\end{minipage}
	\begin{minipage}{0.48\linewidth}
		\centerline{\includegraphics[width=1\textwidth]{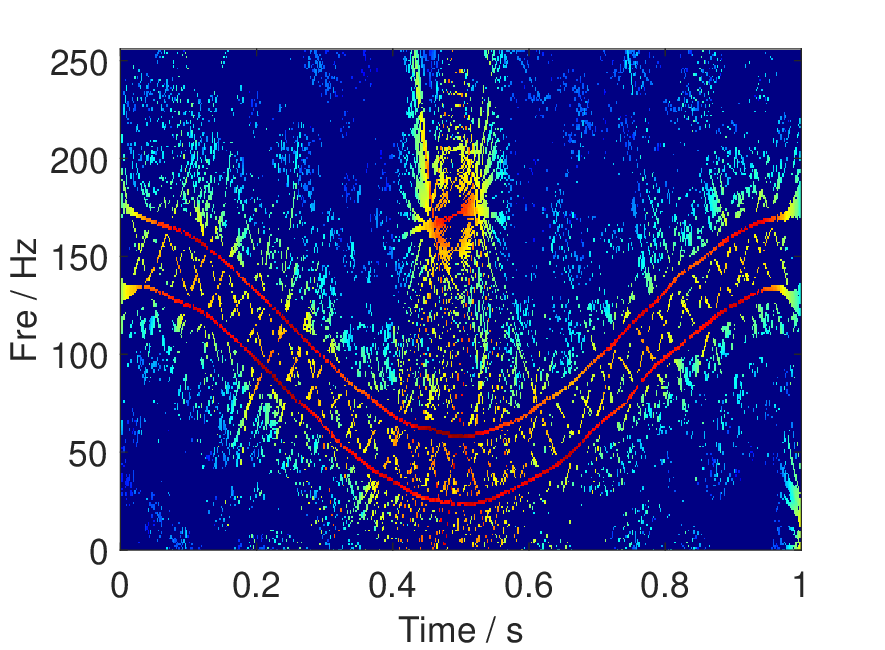}}
		\centerline{(e)}
	\end{minipage}
	\begin{minipage}{0.48\linewidth}
		\centerline{\includegraphics[width=1\textwidth]{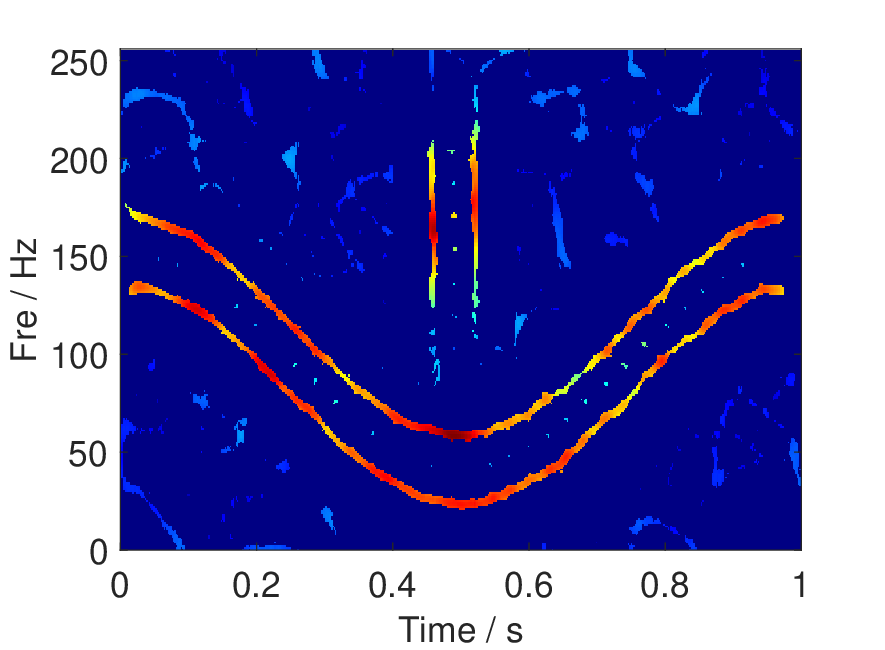}}
		\centerline{(f)}
	\end{minipage}
	\caption{ \small  TF results by various post-processing methods.  (a)  TF plot by SST, (b) TF plot by RM,  (c)  TF plot by TSST, (d) TF plot by SSST,   (e)  TF plot by SECT, (f) TF plot by  MrSECT.}
\end{figure}
Fig. 11 gives the TF representations generated by the six post-processing methods. It clearly shows that the TF result of SST is blurred. The SSST and SECT provide the TF plots  with high energy concentration for $f_1(t)$ and $f_2(t)$,  but fail to characterize impulsive  signals effectively. The TSST  provides  a better TF localization for the transient signals,  while  degrading  the TF sharpness  of the cosine modulation signals  at the stationary part (see TF plot in the time interval [0.4, 0.6]). The RM  is to reassign the spectrogram from the TF direction such that the corresponding time resolution is more concentrated than SST-based methods, but it is not as clear as the MrSECT.  By comparison,  the MrSECT has  the best   TF readability in handling signal (43).  It should be pointed out that the proposed MrSECT is subject to the boundary effects, missing a small amount of boundary information when pursuing a high-concentration TF representation.

\subsection{Application }
A typical biomedical signal, ECG signal from the BIDMC
the Dataset [71], is employed to validate our proposed method.
The sample rate of the ECG signal is 125 Hz. About 8 s ECG segment and its STFT are displayed in Fig. 12.
\begin{figure}[ht!]
	\vspace{-0.2cm}
	\setlength{\belowcaptionskip}{-0.2cm}
	\centering
	\begin{minipage}{0.493\linewidth}
		\centerline{\includegraphics[width=1\textwidth]{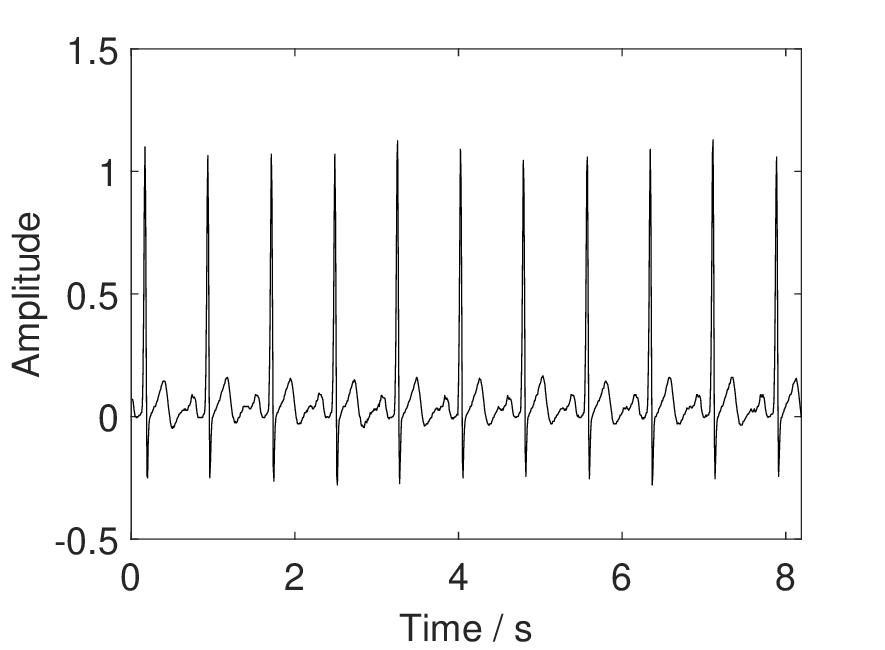}}
		\centerline{(a)}
	\end{minipage}
	\begin{minipage}{0.493\linewidth}
		\centerline{\includegraphics[width=1\textwidth]{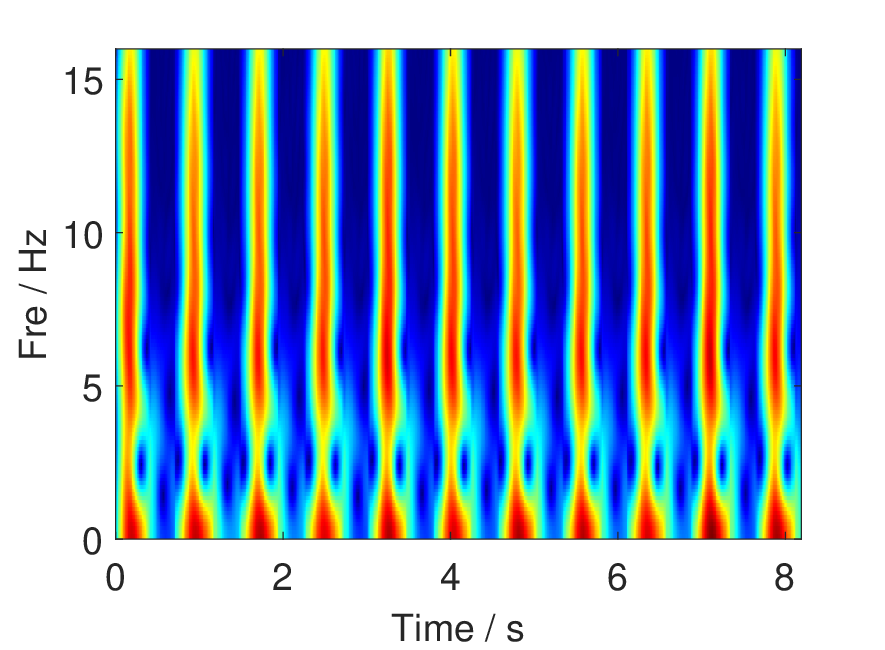}}
		\centerline{(b)}
	\end{minipage}
	\caption{ \small  The time series of the ECG signal and its spectrogram.}
\end{figure}
\begin{figure}[ht!]
	\vspace{-0.2cm}
	\setlength{\belowcaptionskip}{-0.2cm}
	\centering
	\begin{minipage}{0.48\linewidth}
		\centerline{\includegraphics[width=1\textwidth]{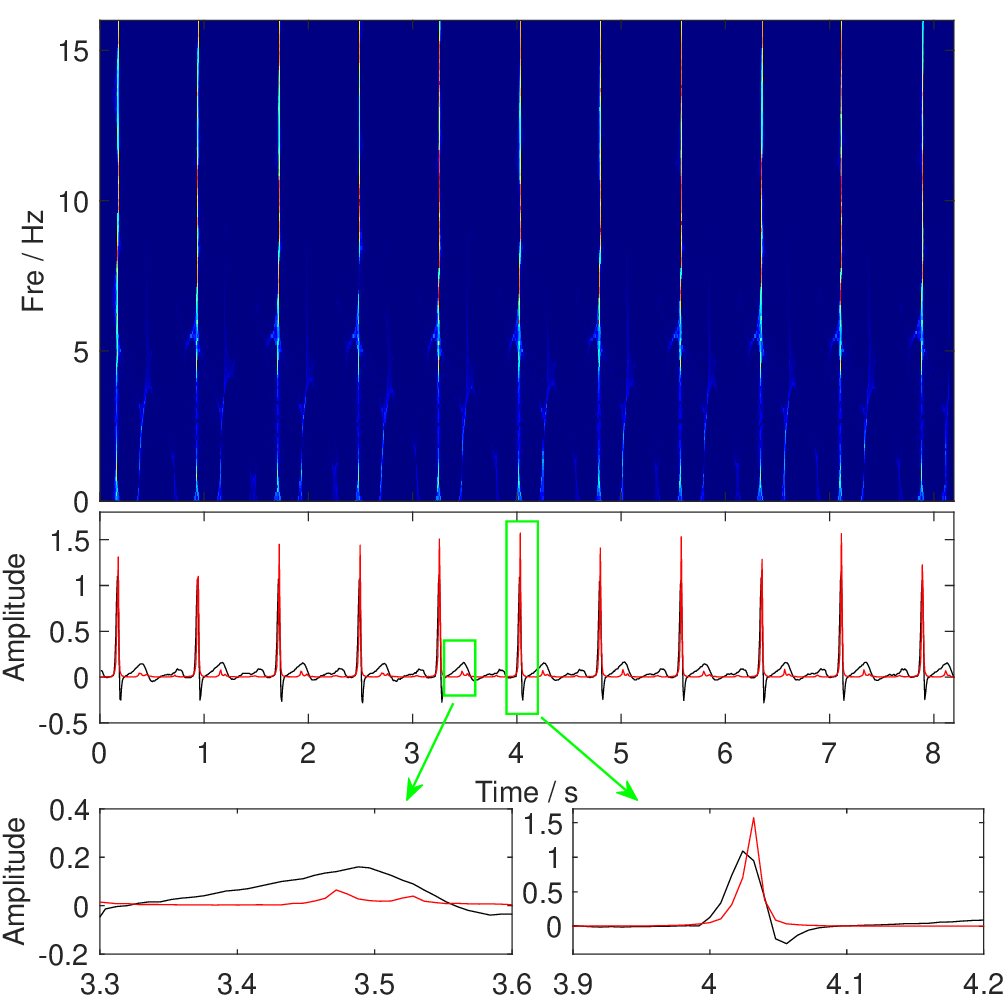}}
		\centerline{(a)}
	\end{minipage}
	\begin{minipage}{0.48\linewidth}
		\centerline{\includegraphics[width=1\textwidth]{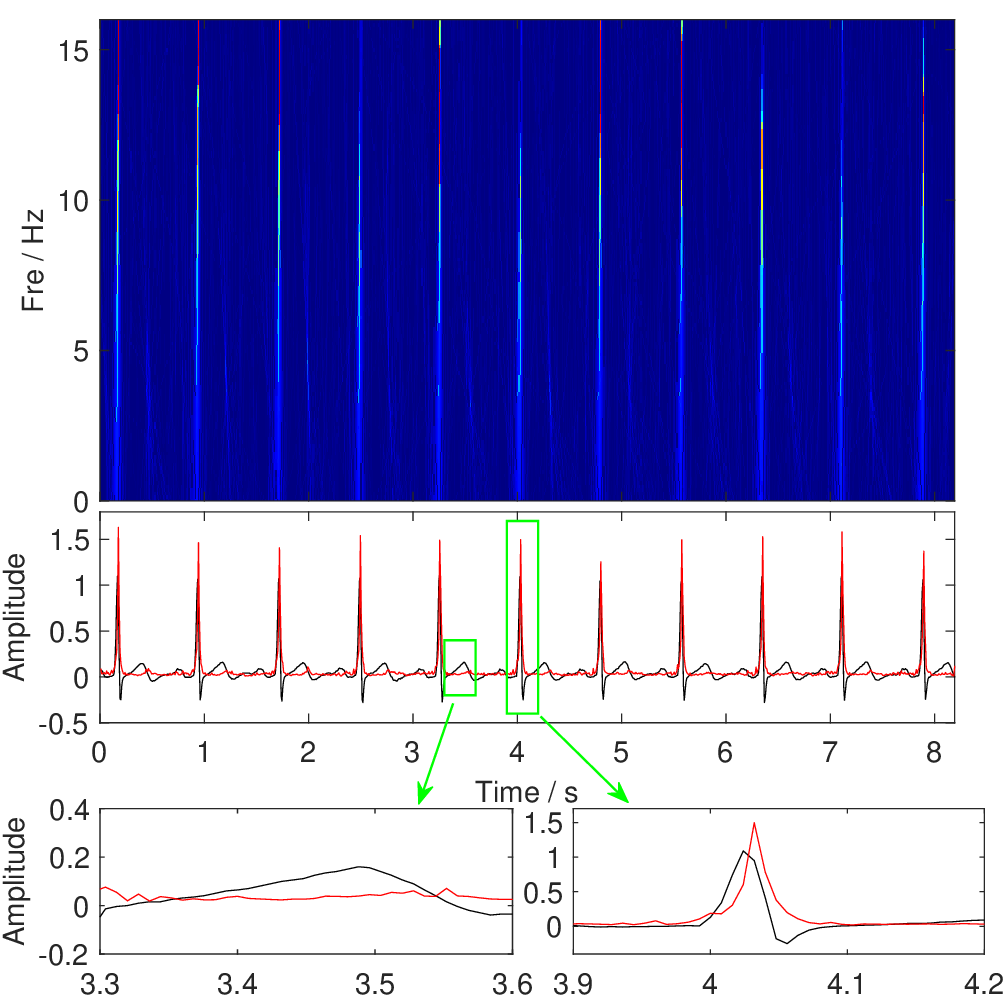}}
		\centerline{(b)}
	\end{minipage}\\ \begin{minipage}{0.48\linewidth}
		\centerline{\includegraphics[width=1\textwidth]{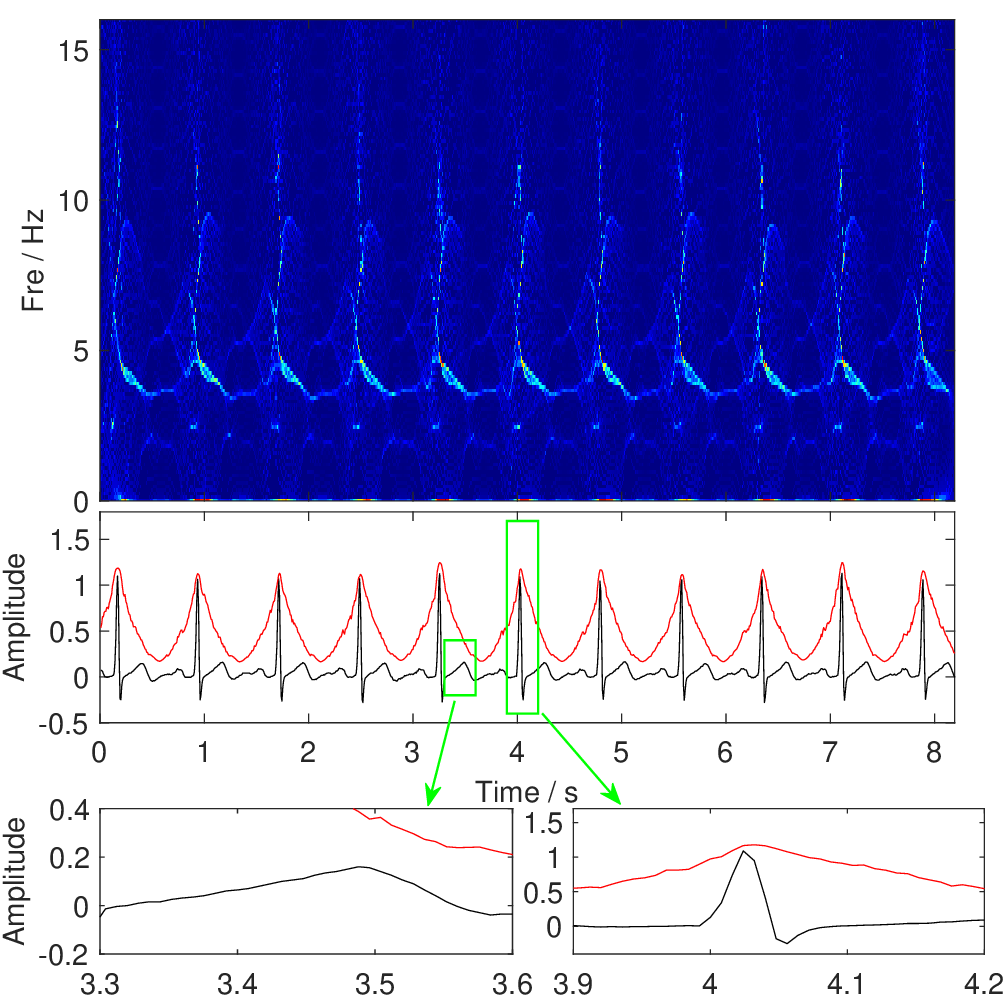}}
		\centerline{(c)}
	\end{minipage}
	\begin{minipage}{0.48\linewidth}
		\centerline{\includegraphics[width=1\textwidth]{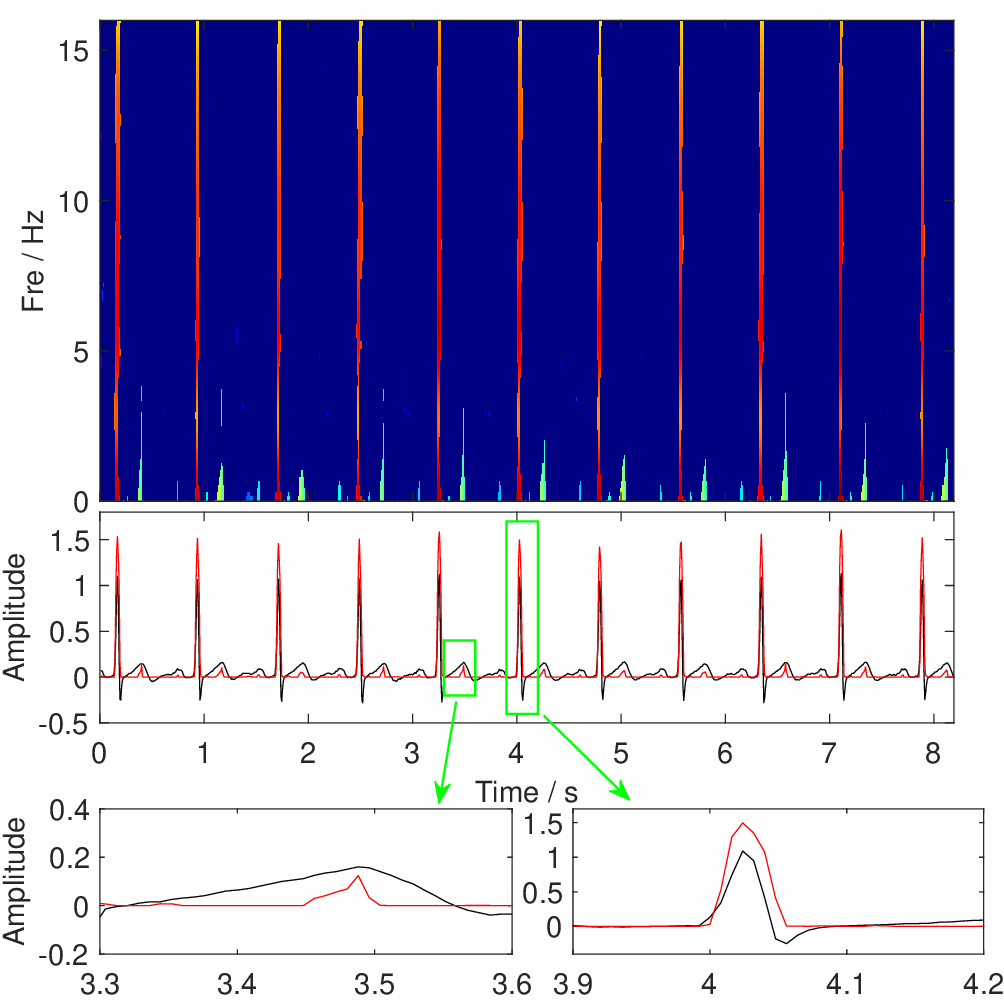}}
		\centerline{(d)}
	\end{minipage}
	\caption{ \small The ECG's analysis results calculated by different methods: (a) RM, (b) TSST, (c) SSST, (d) MrSECT. In the below of each TF result, there are time distribution of each TF representation (red -), the original signal (blak -), and locally-zoomed results, as marked with the green boxes.}
\end{figure}

Fig. 13 shows the analysis results of ECG signal estimated by four different methods: (a) RM, (b) TSST, (c) SSST, (d)
MrSECT. The SSST presents a blurry TF representation, based on which, it is difficult to make further analysis and  judgment. Fig. 13(a,b,d) indicate that RM, TSST, and MrSECT achieve good energy concentration for the spikes of the ECG signal. These  methods (especially the MrSECT) can resolve the spikes and describe the transient characteristics clearly. We also give the time-domain marginal distribution of each TF result as well as  locally-zoomed results for the parts marked with the green boxes. As illustrated  in Fig. 13,   the result obtained by our MrSECT approach  achieves superior effectiveness in TF dynamic estimation because it reflects the dynamic change of the ECG signal more accurately.

\section{Conclusion}
In this paper, we focused on improving the TF resolution of the standard chirplet transform (CT) for multi-component
non-stationary signal analysis. We  theoretically analyzed the effect of the chirp-based Gaussian window on the CT and compared the CT with  the rotation-window CT.  The given parameters analysis  showed some interesting results like that a narrow window limits the matching capacity of chirp basis,  and provided  theoretical   instruction  for CT to choose suitable parameters  to  obtain a high-resolution TF representation. To overcome the limitation of CT in dealing with  signals consisting of a mixture of chirps and pulses,   we   proposed the  multi-resolution chirplet transform (MrCT) by computing the geometric mean of multiple CTs, which takes advantage of multiple estimates at a range of temporal resolutions and frequency bandwidths and localizes the signal in both time and frequency better than it is possible with any single CT.  Moreover, based on the  combined IF equation, we developed  the multi-resolution synchroextracting chirplet transform (MrSECT) to further improve the readability of the proposed  MrCT. The numerical  results  and real signal application  confirmed  the effectiveness  of the proposed approaches.  

Although the  MrCT and MrSECT  have a good performance in addressing  closely-spaced signals, some works should be considered.
\begin{itemize}
	\setlength{\itemsep}{0pt}
	\item [1)] The noise robustness analysis of the proposed MrCT approach, as well as a deeper understanding of the influence of noise on the  combined IF equation,  should  be discussed. 
	\item [2)] The MrSECT is subject to the boundary effects, how to minimize the boundary effects based on forecasting techniques is worthy of consideration.
	\item [3)]The application to other real-life signals (e.g.,  radar signal [4], mechanical vibration signal [32]) needs  to  be further investigated.
\end{itemize}

\section*{Acknowledgements}
We thank K. Abratkiewicz  for  sharing the implementation of  his  work. 


%


\end{document}